\title{Almost All Even Yao-Yao Graphs Are Spanners}
\author{Jian Li \qquad\qquad\qquad Wei Zhan
 \\
 \textrm{lijian83@mail.tsinghua.edu.cn}, \quad zhanw-13@mails.tsinghua.edu.cn,
}
\theoremstyle{definition}
\newtheorem{defi}{Definition}
\theoremstyle{plain}
\newtheorem{thm}{Theorem}
\newtheorem{col}[thm]{Corollary}
\newtheorem{lm}[thm]{Lemma}
\newtheorem{cj}[thm]{Conjecture}
\newcommand{\ov}{\overrightarrow}
\newcommand{\pset}{\mathcal{P}}
\newcommand{\trapezoid}[1]{T_#1}
\newcommand{\ttheta}{\trapezoid{\theta}}
\newcommand{\tone}{\Gamma_1}
\newcommand{\ttwo}{\Gamma_2}
\newcommand{\R}{\mathbb{R}}
\newcommand{\YY}[1]{\mathsf{YY}_{#1}}
\newcommand{\TY}[1]{\mathsf{TY}_{#1}}
\newcommand{\OY}[1]{\mathsf{OY}_{#1}}
\newcommand{\Yao}{\mathsf{Y}}
\newcommand{\rotation}[2]{{#1}^{\circlearrowleft#2}}
\begin{document}
  \maketitle

  \begin{abstract}
    It is an open problem whether Yao-Yao graphs $\YY{k}$ (also known as sparse-Yao graphs)
    are all spanners when the integer parameter $k$ is large enough.
    In this paper we show that, for any integer $k\geq 42$, the Yao-Yao graph $\YY{2k}$ is a $t_k$-spanner, with stretch factor $t_k=6.03+O(k^{-1})$ when $k$ tends to infinity.
    Our result generalizes the best known result which asserts that
    all $\YY{6k}$ are spanners for $k\geq 6$ [Bauer and Damian, SODA'13].
    Our proof is also somewhat simpler.
  \end{abstract}

\section{Introduction}
  Let $\pset$ be a set of points on the Euclidean plane $\R^2$.
  A simple undirected graph $G=(V,E)$ with vertex set $V=\pset$ is called a \textit{geometric $t$-spanner},
  if for any pair of vertices $(u,v)$ there is a path $uu_1\ldots u_s v$ in $G$ such that the total length of the path
  $|uu_1|+|u_1u_2|+\ldots+|u_sv|$ is at most $t$ times the Euclidean distance $|uv|$, where $t$ is a constant,
  called the \textit{stretch factor}.
  The concept of geometric spanners was first proposed in \cite{chew1986there},
  and can be considered as a special case of general graph spanners~\cite{peleg1989graph} if we consider the complete graph with edge lengths defined by the Euclidean metric.

  {\em Yao graph} was first introduced by Andrew Yao in his seminal work on high-dimensional Euclidean minimum spanning trees~\cite{yao1982constructing}.
  Let $C_u(\gamma_1, \gamma_2)$ be the cone with apex $u$ and consisting of the rays with polar angles in $[\gamma_1,\gamma_2)$.
  Given a positive integer parameter $k$, the construction of Yao graph $\Yao_k$ is
  described in the following process.
  \begin{center}
    \begin{minipage}[h]{\linewidth}
    \begin{tabbing}
      \hspace*{0.8cm}\=\hspace*{0.8cm}\= \kill
      Initially $\Yao_k$ is an empty graph. \\
      For each point $u$: \\
      \> For each $j=0,\ldots,k-1$: \\
      \> \> Let $C=C_u(2j\pi/k,2(j+1)\pi/k)$; \\
      \> \> Select $v\in C\cap \pset$ such that $|uv|$ is the shortest; \\
      \> \> Add edge $\ov{uv}$ into $\Yao_k$.
    \end{tabbing}
    \end{minipage}
  \end{center}
  In the above process, ties are broken in an arbitrary but consistent manner.	
  The above process is usually referred to as a ``Yao step''.
  Note that the edges $\ov{uv}$ we added are directed edges (the directions are useful
  in the construction of Yao-Yao graphs).

  Besides the usefulness in constructing minimum spanning trees,
  Yao graphs are also sparse graphs with surprisingly nice spanning properties.
  It is known that Yao graph $\Yao_k$ has a stretch factor $(1-2\sin(\pi/k))^{-1}$ when $k\geq 6$
  (see e.g., \cite{li2002sparse} and \cite{bose2012pi}).
  Later, the spanning properties of $\Yao_4$, $\Yao_6$ and $\Yao_5$ are proved in a series of work
  \cite{damian2009spanner,damian2010yao,barba2014new}.
  Due to the spanning properties and the simplicity of the construction,
  they have important applications in networking and wireless communication;
  we refer to \cite{narasimhan2007geometric} and \cite{li2008wireless} for more details.

  One may notice that a Yao graph may not have a bounded degree.
  This is a serious drawback in certain wireless networking applications, since a wireless node
  can only communicate with a bounded number of neighbors.
  The issue was initially realized by Li et al.~\cite{li2002sparse}.
  To address the issue, they proposed a modified construction with two Yao steps:
  the first Yao step produces a Yao graph $\Yao_k$,
  and the second step, called the ``reverse Yao step'',
  eliminates a subset of edges of $\Yao_k$ to ensure the maximum degree is bounded.
  The reverse Yao step can be described by the following procedure:

  \begin{center}
    \begin{minipage}[h]{\linewidth}
    \begin{tabbing}
      \hspace*{0.8cm}\=\hspace*{0.8cm}\= \kill
      Initially $\YY{k}$ is an empty graph. \\
      For each point $u$: \\
      \> For each $j=0,\ldots,k-1$: \\
      \> \> $C=C_u(2j\pi/k,2(j+1)\pi/k)$; \\
      \> \> Select $v\in C\cap \pset$, $\ov{vu}\in \Yao_k$ such that $|uv|$ is the shortest; \\
      \> \> Add edge $\ov{vu}$ into $\YY{k}$.
    \end{tabbing}
    \end{minipage}
  \end{center}

  The resulting graph, $\YY{k}$, is named as ``Yao-Yao graph'' or ``Sparse-Yao graph''
  in the literature. The node degrees in $\YY{k}$ are clearly upper-bounded by $2k$.
  It has long been conjectured that $\YY{k}$ are also geometric spanners
  when $k$ is larger than some constant threshold \cite{li2002sparse,li2008wireless,bauer2013infinite}:
  \begin{cj}\label{cj}
    There exists a constant $k_0$ such that for any integer $k>k_0$, $\YY{k}$ is a geometric spanner.
  \end{cj}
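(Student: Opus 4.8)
The plan is to establish the conjecture by proving the spanning property for all sufficiently large $k$, separated by parity. For \emph{even} $k$ the statement is essentially already available: the main theorem of this paper gives that every $\YY{2k}$ with $k\geq 42$ is a $t_k$-spanner, and the earlier result of Bauer and Damian handles $\YY{6k}$ for $k\geq 6$. Hence it suffices to produce a threshold beyond which every \emph{odd} $\YY{k}$ is a spanner, and then take $k_0$ to be the maximum of the even and odd thresholds. I would therefore devote the entire argument to odd $k$, seeking a single geometric analysis that exploits the fact that the cone angle $2\pi/k$ shrinks like $O(k^{-1})$.

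The backbone is an inductive routing scheme. Order the pairs $(u,v)$ by Euclidean distance, and for the pair under consideration look at the cone $C$ of $u$ that contains $v$. The Yao step guarantees an out-edge $\ov{uw}$ with $w\in C$ and $|uw|\leq|uv|$; since the cone is narrow, $w$ lies within angle $2\pi/k$ of the direction from $u$ to $v$, which is exactly the configuration yielding the Yao-graph stretch bound $(1-2\sin(\pi/k))^{-1}$, so routing $u\to w$ and recursing on $(w,v)$ makes quantifiable progress toward $v$. The difficulty is that $\ov{uw}$ need not survive the reverse Yao step: it is deleted precisely when $w$ has a strictly shorter incoming Yao edge $\ov{u'w}$ with $u'$ in the same cone of $w$ that contains $u$. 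In that event I would reach the waypoint $w$ along the path $u\to u'\to w$, using the surviving edge $\{u',w\}$ together with a recursively constructed short path from $u$ to $u'$; since $u$ and $u'$ share a narrow cone apexed at $w$ with $|u'w|<|uw|\leq|uv|$, a charging or potential argument should show that the extra detour is a bounded fraction of $|uv|$ and that the nested recursions telescope to $t\cdot|uv|$ for a constant $t$.

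The main obstacle, and the reason the odd case has resisted the existing proofs, is the loss of the antipodal symmetry of the cone system. When $k$ is even, each bounding ray of a cone has its reverse as a bounding ray of the opposite cone, and this is exactly what lets one pin down the position of the blocker $u'$ relative to the target direction and keep the detour bounded. For odd $k$ the reverse of a boundary ray falls in the \emph{interior} of the opposite cone, so the reflection arguments break and the worst-case placement of $u'$ inside the cone of $w$ must be handled head-on. I would attack this by tracking, for each deletion event, the signed angular position of $u'$ within $w$'s cone and bounding the induced detour uniformly in that position, absorbing the resulting $O(k^{-1})$ discrepancies into the contraction budget by taking $k_0$ large. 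Proving that these accumulated discrepancies never overwhelm the per-step progress — equivalently, that the two-level recursion terminates with bounded stretch even without symmetry, where the radial gap $|uw|-|u'w|$ can be large — is where I expect the genuine work, and the technical heart of a full resolution, to lie.
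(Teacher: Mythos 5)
There is a genuine gap here, and it is important to be clear about its nature: the statement you were asked to prove is Conjecture~\ref{cj}, which the paper itself does \emph{not} prove. Theorem~\ref{thm:main} settles only the even case ($\YY{2k}$ for $k\geq 42$), and Section~\ref{s:con} states explicitly that the odd case ``is still elusive and seems to require additional new ideas.'' Your reduction of the conjecture to the odd case via the paper's theorem is correct as far as it goes, but everything you then offer for odd $k$ is a research plan rather than an argument --- you yourself locate ``the genuine work'' in the step you do not carry out. So the proposal does not prove the statement; it restates the open problem.

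Moreover, the scheme you sketch for odd $k$ is the naive two-level induction whose accounting demonstrably fails to close, which is precisely why the literature (including this paper) does not use it. Suppose the Yao edge $\ov{uw}$ toward $v$ is deleted in the reverse Yao step in favor of the kept edge $\ov{u'w}$, with $u'$ in the same narrow cone apexed at $w$ and $|u'w|<|uw|$. Reaching $w$ via $u'$ costs at least $t|uu'|+|u'w|\geq t\left(|uw|-|u'w|\right)+|u'w|$ by induction, up to angular corrections of order $k^{-1}|uw|$, i.e.\ an overhead of order $\sin(\pi/k)\,|uw|$ beyond the ideal cost $t|uw|$; but the per-step progress from the narrow cone is only $|uv|-|wv|\geq\left(1-2\sin(\pi/k)\right)|uw|$, whose slack is of the \emph{same} order $\sin(\pi/k)\,|uw|$. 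The loss and the budget both scale as $k^{-1}$, so ``absorbing the discrepancies by taking $k_0$ large'' is exactly the step that does not work; worse, $u'$ may lie arbitrarily close to $w$, so $|uu'|$ can be arbitrarily close to $|uv|$, deletion events can nest with vanishing progress, and the accumulated overhead along the unfolding has no a priori bound. This is why the paper, even in the even case, does not run this induction directly: it needs the intermediate graphs $\OY{2k}$ and $\TY{2k}$, the emptiness of the curved trapezoid, and the potential function of Lemma~\ref{lm:ty2} to pay for detours globally rather than per step --- and the evenness is used essentially there, e.g.\ the algorithm in Lemma~\ref{lm:ty2} snaps to directions $j\pi/k$, which are cone boundaries of $\YY{2k}$, and Lemma~\ref{lm:main} exploits that $\theta$ is a multiple of $\pi/k$ so that a $\TY{2k}$ edge is a nearest neighbor in its $\YY{2k}$ cone. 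Your sketch supplies no odd-$k$ substitute for any of these devices, so the central difficulty of the conjecture remains untouched.
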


   In sharp contrast to Yao graphs, our knowledge about the spanning properties of Yao-Yao graphs
   is still quite limited.
   Li et al.~\cite{li2002sparse} proved that $\YY{k}$ is connected for $k>6$, and provided extensive experimental evidence
   suggesting $\YY{k}$ are indeed spanners for larger $k$.
   Jia et al.~\cite{jia2003local} and Damian \cite{damian2008simple} showed that $\YY{k}$ is spanner in certain special cases (the underlying point set satisfies certain restrictions).
   It is also known that for small constant $k$'s, none of $\YY{k}$ with $2\leq k \leq 6$ admits a constant stretch factor \cite{barba2014new,damian2009spanner,el2009yao}.
   Recently, a substantial progress was made by Bauer and Damian~\cite{bauer2013infinite}, who showed
   that $\YY{6k}$ are spanners with stretch factor $11.76$ for all integer $k\geq6$ and the factor drops to $4.75$ when $k\geq 8$.
   In fact, a closer examination of the proofs in their paper
   actually implies an asymptotic stretch factor of $2+O(k^{-1})$.
   None of $\YY{k}$ with other $k$ values have been proved or disproved to be spanners.
   We note here that some of the aforementioned work \cite{li2002sparse,damian2008simple}
   focused on \textit{UDGs (unit disk graphs)}.
   But their arguments can be easily translated to general planar point sets as well.
   \footnote{Suppose a spanning property holds for any UDG.
   	For a general set of points, we first scale it so that its diameter is less than 1. We can see that the UDG defined over the scaled set is a complete graph. So the spanning property also holds for the set of points.}

  In this paper, we improve the current knowledge of Yao-Yao graphs and make a step towards
  the resolution of Conjecture~\ref{cj},
  by showing that almost all Yao-Yao graphs with even $k$ are geometric spanners for $k$ large enough.
  Formally, our results is as follows:

  \begin{thm}\label{thm:main}
    For any $k\geq 42$, $\YY{2k}$ is a $t_k$-spanner, where $t_k=6.03+O(k^{-1})$.
  \end{thm}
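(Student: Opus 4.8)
The plan is to reduce Theorem~\ref{thm:main} to a statement about routing a \emph{single} Yao edge, and then to establish that statement by a direct geometric construction confined to a trapezoidal region. First I would invoke the classical Yao-graph telescoping argument: for an arbitrary pair $(u,v)$, repeatedly hopping to the nearest point inside the cone of the current vertex that contains $v$ produces a path $u=a_0\to a_1\to\cdots\to a_m=v$ in $\Yao_{2k}$ whose total length is at most $(1-2\sin(\pi/(2k)))^{-1}\,|uv|$, where the cone angle is $\theta=\pi/k$. Since a $\Yao_{2k}$ edge $\ov{a_ia_{i+1}}$ need not survive the reverse step, it suffices to prove a \emph{single-edge routing lemma}: for every $\ov{ua}\in\Yao_{2k}$ there is a path from $u$ to $a$ in $\YY{2k}$ of length at most $c\,|ua|$ for an absolute constant $c$. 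Replacing each hop by its $\YY{2k}$ routing gives overall stretch $c\,(1-2\sin(\pi/(2k)))^{-1}$; as $k\to\infty$ the telescoping factor tends to $1$, so the asymptotic value $6.03$ must come entirely from $c$, and the $O(k^{-1})$ error from the telescoping factor together with lower-order boundary estimates.

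For the single-edge lemma I would exploit that the parameter $2k$ is \emph{even}. Because the cone boundaries are integer multiples of $\theta=\pi/k$ and $\pi=k\theta$ is itself such a multiple, whenever $a$ is $u$'s nearest neighbour in a cone the reverse direction from $a$ to $u$ lies in the exactly antipodal cone $C'$ of $a$. Hence the reverse Yao step at $a$ either keeps $\ov{ua}$, so that $\ov{ua}\in\YY{2k}$ and we are finished, or it replaces $\ov{ua}$ by some $\ov{pa}\in\YY{2k}$ with $p\in C'$ and $|pa|<|ua|$. The antipodal alignment pins the replacement point $p$ to a single predictable cone; this is precisely the structural feature that fails when the number of cones is odd, and is the reason the argument is restricted to even parameters. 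Iterating the replacement produces a sequence of points spiralling towards $a$, and the task is to convert this spiral into a genuine $\YY{2k}$ path of bounded length.

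The heart of the proof, and the main obstacle, is controlling this conversion. A naive recursion that connects $u$ to $p$ and then appends the edge $\ov{pa}$ does \emph{not} close: when $|pa|$ is tiny, $p$ is almost as far from $u$ as $a$ is, yet the detour incurred by re-routing the pair $(u,p)$ — once combined with the off-segment displacement of the telescoping hops — can exceed the budget $c\,|ua|$, so the per-step inequality degenerates. To circumvent this I would build a \emph{forward-progressing} path instead of a backtracking recursion, and confine the whole path to a trapezoid $\ttheta$ erected on the segment $ua$ with opening angle controlled by $\theta$. Splitting $\ttheta$ into two sub-regions $\tone$ and $\ttwo$, I would argue by induction on the number of points inside the trapezoid that every edge produced by the reverse step again lies inside an appropriately rotated copy $\rotation{\ttheta}{\phi}$ of the trapezoid, where the rotation angle $\phi$ tracks the winding of the spiral and the even/antipodal structure keeps the rotated cones aligned with the global cone partition. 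Summing the segment lengths across these nested, geometrically shrinking trapezoids then gives a convergent series whose value is the constant $c$.

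Finally I would optimise the trapezoid estimate: the per-level contraction ratio is a function $r(\theta)<1$ of the cone angle, the total path length is bounded by $|ua|/(1-r(\theta))$ up to boundary corrections, and carrying the explicit geometry through yields $c=6.03+O(\theta)$; multiplying by the telescoping factor $(1-2\sin(\pi/(2k)))^{-1}=1+O(k^{-1})$ gives $t_k=6.03+O(k^{-1})$. The hypothesis $k\ge 42$ enters exactly here, since both the containment of the rotated trapezoids and the contraction $r(\theta)<1$ require $\theta=\pi/k$ to be small, and $k=42$ is the threshold at which the governing inequalities hold simultaneously. I expect the trapezoid-containment step — verifying that each reverse-step replacement edge remains inside the current rotated copy $\rotation{\ttheta}{\phi}$ and connects cleanly to the next level — to be the most delicate and calculation-heavy part of the argument.
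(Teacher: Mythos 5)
You have correctly identified two key ingredients — the antipodal-cone alignment that makes even parameters special, and the fact that a naive ``route $u$ to $p$, then append $\ov{pa}$'' recursion does not close — but the mechanism you propose to fix the recursion has a genuine gap. Your single-edge routing lemma is stated for plain Yao edges $\ov{ua}\in\Yao_{2k}$, and for such an edge the only region guaranteed to be empty of points of $\pset$ is the thin sector $D(u,|ua|)\cap C_u(\cdot)$. Every piece of your trapezoid machinery — confining the path to $\ttheta$ erected on $ua$, the claim that each reverse-step replacement edge stays inside a rotated copy $\rotation{\ttheta}{\phi}$, the forward progress of the path — requires the \emph{trapezoid itself} to be empty, which a Yao edge does not provide. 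If $\ttheta$ contains points of $\pset$, a forward hop can land on such a point, and nothing constrains where that point's cones and surviving $\YY{2k}$ edges lead; your ``induction on the number of points inside the trapezoid'' has no inductive step that handles this. The paper resolves exactly this difficulty by changing the intermediate object: it defines the trapezoidal Yao graph $\TY{2k}$, whose edges come with an empty curved trapezoid \emph{by construction}, proves $\TY{2k}$ is a spanner separately (by showing it contains the overlapping Yao graph $\OY{2k}$), and only then shows that $\YY{2k}$ spans $\TY{2k}$. Both of the paper's key lemmas (the path construction of Lemma~\ref{lm:ty2} and the induction of Lemma~\ref{lm:main}) hypothesize the empty trapezoid; without it they are false as stated.

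The second gap is quantitative: your claimed ``per-level contraction ratio $r(\theta)<1$'' and convergent geometric series do not exist in the problematic regime you yourself flag. When the replacement point $p$ at $a$ satisfies $|pa|\ll|ua|$, the next pair to route is $(u,p)$ with $|up|$ arbitrarily close to $|ua|$, so there is no uniform shrinking from one level to the next, and the bound $|ua|/(1-r(\theta))$ has no derivation behind it — the value $6.03$ is fitted to the target rather than computed. The paper's substitute for contraction is additive, not multiplicative: a potential function $\Phi(u)=x_u+(2\tau_{2k}+1)|y_u|-l(u)$ that is monotone along the constructed path (Lemma~\ref{lm:ty2}), combined with induction on edge length applied to $\TY{2k}$-paths all of whose edges are provably shorter than $|uv|$ (Lemma~\ref{lm:main}). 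Relatedly, your accounting of the constant is inconsistent with what this style of argument can deliver: you want the Yao telescoping factor $(1-2\sin(\pi/(2k)))^{-1}\to 1$ to carry none of the loss, with all of $6.03$ coming from single-edge routing, whereas in the paper $6.03=\sqrt{2}\cdot(1-2\sin(\pi/8))^{-1}$, and the factor $(1-2\sin(\pi/8))^{-1}\approx 4.26$ is the stretch of the intermediate spanner itself, which does \emph{not} tend to $1$ as $k\to\infty$ because the trapezoid geometry forces the overlapping cones of $\OY{2k}$ to have width about $\pi/2$ independently of $k$.
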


  \subparagraph{Our Technique}
  Our proof contains two major steps.
  \begin{enumerate}
  \item (Section~\ref{s:oy})
  We first introduce two classes of intermediate graphs, called
  {\em overlapping Yao graphs}
   (denoted as $\OY{k}$)
  and {\em trapezoidal Yao graphs}
   (denoted as $\TY{k}$).
   The construction of overlapping Yao graphs
   is similar to that of Yao graphs, except that
   the cones incident on a vertex overlap with each other.
  We can easily show that $\OY{k}$ is a geometric spanner.
  The definition of trapezoidal Yao graphs is also similar to Yao graphs, but takes advantage of a shape called {\em curved trapezoid} (defined in Section~\ref{s:pre}).
  We can show that $\OY{k}$ is a subgraph of $\TY{k}$, implying
  $\TY{k}$ is a geometric spanner as well.
  \item
  (Section~\ref{s:main})
  In the second step, we show that $\YY{2k}$ spans $\TY{2k}$
  (i.e., for any $u,v \in \pset$,
  the shortest $u$-$v$ path in $\YY{2k}$ is
  at most a constant times longer than the shortest $u$-$v$ path in $\TY{2k}$).
  \end{enumerate}
  Our proof makes crucial use of the properties of curved trapezoids.
  Roughly speaking, curved trapezoids are more flexible than triangles which were used in \cite{bauer2013infinite}),
  which is the main reason for the improvement from $\YY{6k}$
  to more general $\YY{2k}$.

  \subparagraph{Related Works}
  Replacing the Euclidean distance $|\cdot|$ by a power $|\cdot|^\kappa$ with a constant $\kappa\geq 2$
  leads to the definition of \textit{power spanners}.
  Since the power of length models the energy consumed in wireless transmissions,
  power spanners have important implications in wireless networking applications.
  In this setting, Yao-Yao graphs $\YY{k}$ have been proved to be power spanners for any $\kappa\geq 2$ when $k>6$~\cite{jia2003local,schindelhauer2007geometric}.
  It is clear that when a graph is a geometric spanner, it must also be a power spanner (See \cite[Lemma 1]{li2001power}),
  however the reverse does not hold.

  We also note that neither Yao graphs nor Yao-Yao graphs can be guaranteed to be planar graphs
  \cite{schindelhauer2007geometric,kanj2012certain},
  whereas Delaunay triangulation provides another type of spanner which is planar but without bounded degree
  \cite{dobkin1987delaunay,bose2010stretch}.
  In order to facilitate network design in certain applications, some previous work   \cite{bose2002constructing,li2003efficient}
  made use of both
  Yao graphs and Delaunay graphs to produce degree-bounded and planar geometric spanners.

  The paper is organized as follows. We introduce some standard notations and useful tools,
  including the shape of curved trapezoid, in Section \ref{s:pre}.
  We introduce overlapping Yao graphs, trapezoidal Yao graphs, and prove their spanning properties
  in Section \ref{s:oy}. We prove
  our main result in Section \ref{s:main}.
  Finally, we conclude with some future work in Section \ref{s:con}.

\section{Preliminaries}\label{s:pre}

    $\pset$ is the underlying set of points in $\R^2$.
    $D(a,\rho)$ denotes an open disk centered at point $a$ with radius $\rho$.
    The boundary and closure of a region $R$ are denoted by $\partial R$ and $\overline{R}$, respectively.
    Let $S(a,\rho)=\partial D(a,\rho)$ be the circle centered at $a$ with radius $\rho$.
    The length of shortest $u$-$v$ path in a graph $G$ is denoted by $d_G(u,v)$.

  On $\R^2$, a point $u$ with coordinates $(x,y)$ is denoted by $u(x,y)$.
  Let $o(0,0)$ be the origin of $\R^2$.
  The positive direction of $x$-axis is fixed as the polar axis throughout
  the construction and analysis.
  For a point $a\in \R^2$, we use $x_a$ to denote its $x$-coordinate and $y_a$ its $y$-coordinate.
  We use $\varphi(uv)$ to represent the polar angle of vector $\ov{uv}$.
  The angle computations are all under the modulo of $2\pi$, and angle subtraction is regarded as the counterclockwise difference.
  A cone between polar angles $\gamma_1<\gamma_2$ with apex at the origin is
  denoted as $C(\gamma_1,\gamma_2)=\{u\mid \varphi(ou)\in[\gamma_1,\gamma_2)\}$.

  It is also necessary to introduce some notations of standard affine transformations on a geometric object $\mathcal{O}$ in the plane:
  \begin{itemize}
    \item[-] (Dilation) If $\mathcal{O}$ is uniformly scaled by factor $\lambda$ with the origin as the center, the result is denoted by $\lambda\mathcal{O}=\{\lambda z\mid z\in\mathcal{O}\}$.
    \item[-] (Translation) If $\mathcal{O}$ is translated so that the original point goes to point $u$, the result is denoted by $u+\mathcal{O}=\{u+z\mid z\in\mathcal{O}\}$.
    \item[-] (Rotation) If $\mathcal{O}$ rotated an angle $\gamma$ counterclockwise with respect to the origin, the result is denoted by   $\rotation{\mathcal{O}}{\gamma}$.
    \item[-] (Reflection) If $\mathcal{O}$ is reflected through the $x$-axis, the result is denoted by $\mathcal{O}^-$.
  \end{itemize}

  By the above notations, we can denote the cone with apex $u$
  by $u+C(\gamma_1,\gamma_2)$ (abbreviated as $C_u(\gamma_1,\gamma_2)$).
  \begin{figure}[t]
    \centering
  {
    \includegraphics[width=0.45\linewidth]{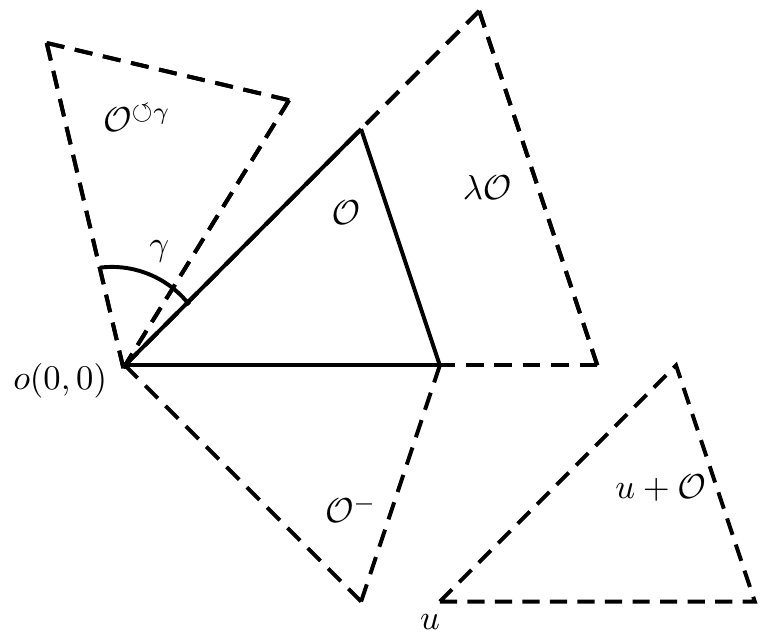}
  }\quad
  {
    \includegraphics[width=0.35\linewidth]{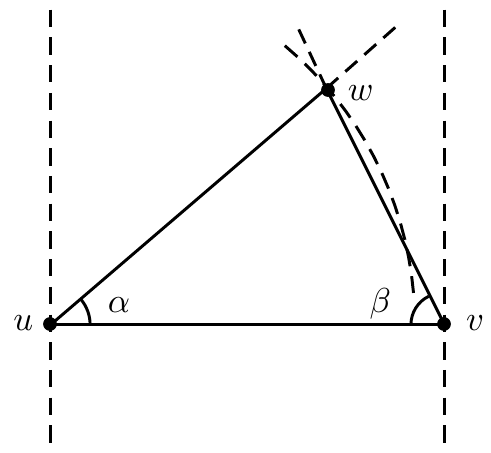}
  }
  \caption{Left: Affine transformations. Right: Proof for Lemma \ref{lm:ineq}.
  }
  \label{fig:ineq}
  \end{figure}
  \paragraph*{Geometric Inequalities}
  In order to attain a constant stretch factor for Yao graphs and their variants,
  one often needs to bound a certain geometric ratio.
  Here we present a simple yet general lemma for this purpose. Let the {\em open strip} on the segment $uv$ be the collection of points
  \begin{lm}\label{lm:ineq}
    Suppose $\tau\geq 1$ is a constant.
    Let $w$ be a point with $\angle wuv,\angle wvu\in[0,\pi/2)$ (see Figure \ref{fig:ineq}).
    When $\tau|vw|<|uv|$, the following statements about the ratio $|uw|/(|uv|-\tau|vw|)$ hold:
    \begin{enumerate}
		\item If $w$ is restricted within a compact segment of an arc centered at $u$, the ratio is maximized when $|vw|$ is the largest;
      \item If $w$ is restricted within a compact segment of a ray from $v$, the ratio is maximized when $|vw|$ is the largest;
     \item If $w$ is restricted within a compact segment of
      a ray originated from $u$,
      the maximum ratio is achieved
      when $w$ makes $|uw|$ the largest or smallest;
      \end{enumerate}
  \end{lm}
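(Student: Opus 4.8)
The plan is to fix $u$ and $v$, place them on the $x$-axis with $u=o$ and $v=(c,0)$ where $c=|uv|$, and write $w=(x,y)$ with $y\ge 0$ after reflecting if necessary. The two hypotheses $\angle wuv,\angle wvu\in[0,\pi/2)$ translate exactly into $0<x<c$, i.e.\ $w$ lies in the open strip on $uv$; this is the geometric content of the angle conditions and it is used throughout. Writing $p=|uw|$ and $q=|vw|$, the object to control is $R=\dfrac{p}{c-\tau q}$, whose denominator is positive precisely because $\tau q<c$. Each of the three restrictions confines $w$ to a one-parameter family, so in every case the task reduces to a single-variable extremal problem for $R$.

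The first part is immediate: along an arc centred at $u$ the numerator $p$ is constant, while $c-\tau q$ is a positive and strictly decreasing function of $q$, so $R$ increases with $q=|vw|$ and is maximised when $|vw|$ is largest. For the second part I would parametrise the ray from $v$ as $w=v+s\hat d$ with $s=|vw|$, so that $p^2=c^2+2cs\cos\alpha+s^2$ for the fixed direction angle $\alpha$. A direct differentiation shows that the sign of $R'(s)$ equals the sign of $c\bigl(c(\cos\alpha+\tau)+s(1+\tau\cos\alpha)\bigr)$. Since $\cos\alpha\ge -1$ and $\tau\ge 1$ we have $\cos\alpha+\tau\ge 0$; if the second bracket is also nonnegative we are done, and otherwise the strip forces $\cos\alpha\le 0$ and the bound $s<c/\tau$ together with $\tau+\tau^{-1}\ge 2$ gives $c(\cos\alpha+\tau)+s(1+\tau\cos\alpha)>c\,(2\cos\alpha+\tau+\tau^{-1})\ge 0$. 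Either way $R$ is nondecreasing in $s$, so again the maximum occurs at the largest $|vw|$.

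The third part is the one requiring a genuine shape argument, and I expect it to be the main obstacle: here $R$ need not be monotone, so instead of monotonicity I must show it is \emph{quasiconvex} along the ray from $u$, which is exactly what ``the extremum is at the largest or smallest $|uw|$'' means. Parametrising the ray as $w=u+p\hat e$ with $\mu=\cos\beta\in(0,1]$ (the strip forces $\cos\beta>0$) and $q=\sqrt{p^2-2cp\mu+c^2}$, a computation reduces the sign of $R'(p)$ to the sign of $g(p)=q+\tau p\mu-\tau c$. The key observation is that $g'(p)=\frac{p-c\mu}{q}+\tau\mu$ and that $\frac{p-c\mu}{q}$ is strictly increasing in $p$, whence $g$ is convex; since $g(0)=c(1-\tau)\le 0$ and $g(p)\to+\infty$, the sublevel set $\{g\le 0\}$ is an interval containing $0$, so $g$ (hence $R'$) changes sign at most once, from negative to positive. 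Consequently $R$ first decreases and then increases, and on any compact segment of the ray its maximum is attained at one of the two endpoints. The delicate points to get right are the translation of the angle hypotheses into the strip condition and, in the second and third parts, the careful use of the constraint $\tau q<c$ to keep the relevant brackets under control.
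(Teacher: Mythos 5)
Your proof is correct, and it takes a genuinely different route from the paper's. The paper's entire argument rests on a single law-of-sines identity: writing $\alpha=\angle wuv$ and $\beta=\angle wvu$, it shows $|uw|/(|uv|-\tau|vw|)=\bigl(\cos\alpha-\sin\alpha\,(\tau-\cos\beta)/\sin\beta\bigr)^{-1}$, from which statement 2 is read off as monotonicity in $\alpha$ (with $\beta$ fixed along a ray from $v$, and $\tau-\cos\beta\ge 0$), and statement 3 follows from convexity of $(\tau-\cos\beta)/\sin\beta$ on $(0,\pi/2)$ (with $\alpha$ fixed along a ray from $u$, $\beta$ varying monotonically with $|uw|$). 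You instead work in Cartesian coordinates and differentiate directly: for the ray from $v$ your reduction of the sign of $R'(s)$ to $c\bigl(c(\cos\alpha+\tau)+s(1+\tau\cos\alpha)\bigr)$ is correct, as is the elimination of the bad case via $s<c/\tau$ and $\tau+\tau^{-1}\ge 2$; for the ray from $u$ your reduction to $g(p)=q+\tau p\mu-\tau c$, the convexity of $g$ (since $(p-c\mu)/q$ has derivative $c^2(1-\mu^2)/q^3\ge 0$), and the conclusion of quasiconvexity from $g(0)=c(1-\tau)\le 0$ and $g(p)\to+\infty$ all check out. The trade-off: the paper's identity is shorter and lets both nontrivial statements fall out of one formula, but it asserts the convexity of $(\tau-\cos\beta)/\sin\beta$ without verification and is formally degenerate when $w$ is collinear with $u$ and $v$ (the case $\alpha=\beta=0$ gives $0/0$), whereas your calculus argument is more elementary and fully explicit, and handles those boundary positions with no special care. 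Two small remarks: in your part 2, the negativity of $\cos\alpha$ is forced by $1+\tau\cos\alpha<0$ itself rather than by the strip (and in fact you never need it, since $2\cos\alpha+\tau+\tau^{-1}\ge 0$ already follows from $\cos\alpha\ge -1$); and in part 3 one should note, as you implicitly do, that the constraint $\tau q<c$ confines $p$ to an interval (by convexity of $q(p)$), so the quasiconvexity conclusion applies to the whole admissible segment.
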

  \begin{proof}
    The first statement
    is straightforward since both $|uv|$ and $|uw|$ are fixed.
    Now, we show the last two properties.
    Let $\alpha=\angle wuv$ and $\beta=\angle wvu$.
    By the law of sines and the fact that
    $\sin(\alpha+\beta)=\sin\alpha\cos\beta+\cos\alpha\sin\beta$, we can see that
    \begin{equation}\label{eq:lm}
    \dfrac{|uw|}{|uv|-\tau|vw|}=\frac{\sin\beta}{\sin(\alpha+\beta)-\tau\sin(\alpha)}
    =\frac{1}{\cos\alpha-\sin\alpha\cdot(\tau-\cos\beta)/\sin\beta}.
    \end{equation}
    One can see the ratio is maximized when $\alpha$ is maximized, which implies the
    second statement.
    Moreover, $(\tau-\cos\beta)/\sin\beta$ is a convex function of $\beta$ for $\beta\in (0,\pi/2)$.
    Hence, it is maximized when $\beta$ is maximized or minimized,
    which implies the last statement.
  \end{proof}
   Due to Lemma \ref{lm:ineq}, to compute an upper bound for
   $|uw|/(|uv|-|vw|)$ in a region,
   it suffices to consider a small number of extreme positions.
   In particular, the following corollary is an immediately consequence,
   which is an improvement of a classical result mentioned in \cite{czumaj2004fault}:
  \begin{col}\label{col:ineq}
    $R=\overline{D(u,|uv|)\cap C_u(\gamma_1,\gamma_2)}$ is a sector with apex  $u$ and with $v$ on its arc. Suppose $\alpha=\max\{\varphi(uv)-\gamma_1,\gamma_2-\varphi(uv)\}<\pi/3$.
    Then, for all $w\in R\setminus\{u\}$,
    it holds that
    $$
    \dfrac{|uw|}{|uv|-|vw|}\leq\left(1-2\sin\frac{\alpha}{2}\right)^{-1}.
    $$
  \end{col}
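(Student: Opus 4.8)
The plan is to use Lemma~\ref{lm:ineq} with $\tau=1$ to funnel the worst-case position of $w$ down to a single extremal configuration, and then evaluate the ratio there.

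Before invoking the lemma I would first check that its hypotheses are satisfied for every $w\in R\setminus\{u\}$. Since $w$ lies in the cone $C_u(\gamma_1,\gamma_2)$ and $v$ lies in the same cone on the arc, the angle $\angle wuv$ is at most $\max\{\varphi(uv)-\gamma_1,\gamma_2-\varphi(uv)\}=\alpha<\pi/3$, so in particular $\angle wuv\in[0,\pi/2)$. To see that $\angle wvu\in[0,\pi/2)$ and that the denominator $|uv|-|vw|$ is positive, I would place $u$ at the origin with $v$ on the positive $x$-axis at distance $|uv|$ and write $w=(r\cos\phi,r\sin\phi)$ with $0<r\le|uv|$ and $|\phi|\le\alpha$. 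The dot product $\ov{vu}\cdot\ov{vw}=|uv|(|uv|-r\cos\phi)$ is positive because $r\cos\phi<|uv|$ (with equality only at $w=v$), which gives $\angle wvu<\pi/2$; and the law of cosines together with $\cos\angle wuv>\cos(\pi/3)=\tfrac12$ yields $|vw|^2=|uv|^2+|uw|^2-2|uv||uw|\cos\angle wuv<|uv|^2$, so $|vw|<|uv|$. Hence Lemma~\ref{lm:ineq} applies with $\tau=1$ throughout $R$.

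With the hypotheses in hand, I would run the reduction in two steps. Fixing the direction of $\ov{uw}$ and letting $w$ slide along that ray inside $R$, statement~3 says the ratio is maximized at an endpoint of the segment; the endpoint at $u$ contributes ratio $0$, so the maximizer must lie on the arc, i.e.\ $|uw|=|uv|$. Restricting now to this arc, which is centered at $u$, statement~1 says the ratio grows with $|vw|$, so the maximum is attained at whichever endpoint of the arc is farthest from $v$, namely the one at angular distance $\alpha$ from $\varphi(uv)$ (the chord length is monotone in the angular separation since that separation is at most $\alpha<\pi$).

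Finally I would evaluate the ratio at this extremal $w$. There the triangle $uvw$ is isosceles with $|uw|=|uv|$ and apex angle $\angle wuv=\alpha$, so the chord length is $|vw|=2|uv|\sin(\alpha/2)$ and the ratio equals $|uv|/(|uv|-2|uv|\sin(\alpha/2))=(1-2\sin(\alpha/2))^{-1}$, as claimed. The only genuinely delicate step is the hypothesis check in the second paragraph: the whole argument rests on $\alpha<\pi/3$ forcing both $|vw|<|uv|$ (so the denominator never vanishes as $w$ ranges over $R$) and $\angle wvu<\pi/2$, without which Lemma~\ref{lm:ineq} could not be applied uniformly.
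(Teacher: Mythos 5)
Your overall route is the same as the paper's: take $\tau=1$ in Lemma~\ref{lm:ineq}, use statement~3 along rays from $u$ and statement~1 along arcs centered at $u$ to push $w$ to extreme positions, and evaluate the isosceles configuration at a corner of the sector, giving $|vw|=2|uv|\sin(\alpha/2)$ and the bound $(1-2\sin(\alpha/2))^{-1}$. Your hypothesis-checking paragraph (that $\angle wvu<\pi/2$ and $|vw|<|uv|$ follow from $\alpha<\pi/3$) is sound and is a point the paper leaves implicit.

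However, there is a genuine error in your reduction: you discard the inner endpoint of each ray by asserting that ``the endpoint at $u$ contributes ratio $0$.'' This is false. At $w=u$ the expression is the indeterminate form $0/0$, and as $w$ tends to $u$ along a ray making angle $\phi=\angle wuv$ with $uv$, both $|uw|$ and $|uv|-|vw|$ vanish at the same rate: $|uv|-|vw|\approx|uw|\cos\phi$, so the ratio tends to $(\cos\phi)^{-1}\geq 1$, not to $0$. (Equivalently, set $\beta\to0$ in equation~(\ref{eq:lm}); note also that the ray segment with $u$ deleted is not compact, so statement~3 only yields this endpoint value through a limit.) Consequently there are \emph{two} candidate extreme values, not one, and your argument as written does not rule out the one near $u$. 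The paper's proof handles exactly this: it compares the corner value $\left(1-2\sin(\alpha/2)\right)^{-1}$ with the limiting value $(\cos\alpha)^{-1}$ and observes that the former is larger. That comparison is what saves the conclusion: since $\cos\alpha=1-2\sin^2(\alpha/2)\geq 1-2\sin(\alpha/2)>0$ for $\alpha<\pi/3$, one indeed has $(\cos\alpha)^{-1}\leq\left(1-2\sin(\alpha/2)\right)^{-1}$, so your final bound survives --- but only by virtue of an inequality your proof never establishes.
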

  \begin{proof}
    Take $\tau=1$ in Lemma \ref{lm:ineq}. By considering the rays from $u$ and the arcs centered at $u$, and applying the first and third cases in Lemma \ref{lm:ineq}, we can see that the ratio can only be maximized when $w$ is at the two corners of the sector with $\beta=(\pi-\alpha)/2$, or when $w$ approaches $u$ with $\beta=0$. The two upper bounds in equality~\ref{eq:lm} in these cases are $\left(1-2\sin(\alpha/2)\right)^{-1}$ and $(\cos\alpha)^{-1}$, and the former one is larger.
  \end{proof}

  \paragraph*{Curved Trapezoid $\ttheta$}

  To construct the trapezoidal Yao Graphs $\TY{k}$,
  we need to define an open shape, called \textit{curved trapezoid}, as follows.

  \begin{defi}[Curved Trapezoid $\ttheta$]\label{def:ct}
  Fix two points $o(0,0)$ and $p(1,0)$.
  Given $\theta\in[\pi/4,\pi/3)$, we define the curved trapezoid $\ttheta$ as:
  \[\ttheta = \{u(x,y)\mid 0< x < 1, 0< y< \sin\theta, |ou|<1, |pu|< 1\}.\]
  \end{defi}
  Intuitively, it is the convex hull of two sectors with apices at $o$ and $p$ respectively.
  We regard $\ttheta$ as a shape attached to the origin $o$,
  and the closed arc not incident on $o$ is called the \textit{critical arc}.
  See Figure \ref{fig:t} for an example.
  \begin{figure}[t]
    \centering
  {
    \includegraphics[width=0.3\linewidth]{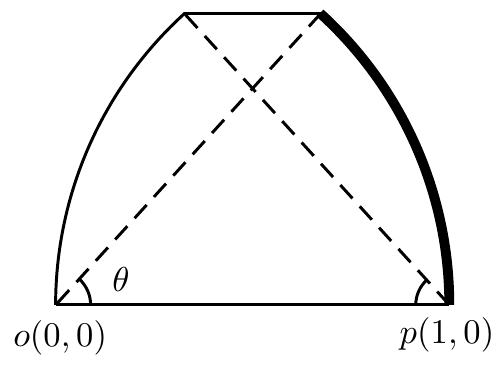}
  }\qquad
  {
    \includegraphics[width=0.5\linewidth]{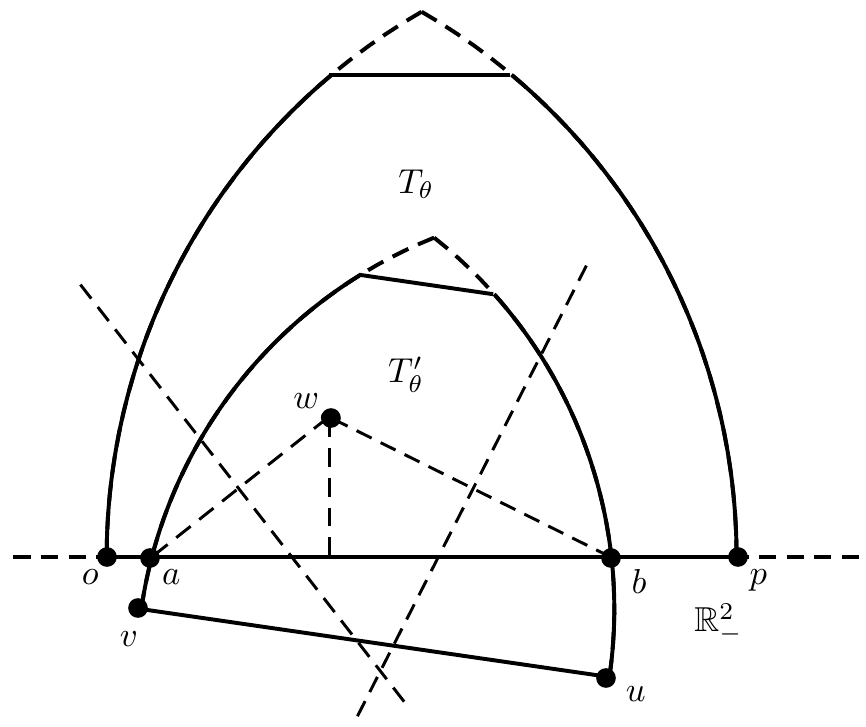}
  }
  \caption{Left: The curved trapezoid. The critical arc is shown in bold.
  Right: An illustration of Lemma \ref{lm:lmt} with its proof.}
  \label{fig:t}
  \end{figure}

  We list some useful properties of $\ttheta$ here.
  It is straightforward to show that $\ttheta$ is symmetric with respect to the vertical line $x=\frac{1}{2}$.
  For any $u\in \ttheta$ it holds that $0<\varphi(ou)<\pi/2$.
  If $u$ is on the critical arc, it holds that $0<\varphi(ou)<\theta$.
  The following crucial lemma, illustrated in Figure \ref{fig:t},
  is also an immediate consequence from the definition.
  \begin{lm}\label{lm:lmt}
    Denote $\mathbb{R}^2_-$ as the lower half-plane $\{(x,y)\mid y\leq 0\}$.
    Consider any two points
    $u,v\in\mathbb{R}^2_-$
    satisfying $0<x_u<1$, $|\varphi(uv)-\pi|<\dfrac{1}{6}\pi$, and $|ou|,|pv|\in[|uv|,1)$.
    We construct a similar shape $\ttheta':=u+|uv|\rotation{(\ttheta^-)}{\varphi(uv)}$.
    We have that $\ttheta'-\ttheta$ is completely contained in lower half plane $\mathbb{R}^2_-$.
  \end{lm}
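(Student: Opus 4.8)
The plan is to unwind the construction of $\ttheta'$ as the image of $\ttheta$ under the orientation-reversing similarity $\Phi(w)=u+s\,\rotation{(w^-)}{\varphi}$, where $s=|uv|$ and $\varphi=\varphi(uv)$. A reflection through the $x$-axis followed by a rotation and a dilation is a similarity of ratio $s$, so $\Phi$ sends $o\mapsto u$ and $p\mapsto v$ and scales all distances to $o$ and to $p$ by $s$: for $z=\Phi(w)$ one has $|uz|=s\,|ow|$ and $|vz|=s\,|pw|$. Since $\ttheta\subseteq D(o,1)\cap D(p,1)$ by Definition~\ref{def:ct}, its image satisfies $\ttheta'\subseteq D(u,s)\cap D(v,s)$, and the two arcs bounding $\ttheta$ on $S(o,1)$ and $S(p,1)$ map to arcs on $S(u,s)$ and $S(v,s)$. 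Because $\ttheta$ lies strictly in the open upper half-plane, it is disjoint from $\mathbb{R}^2_-$, so the assertion $\ttheta'-\ttheta\subseteq\mathbb{R}^2_-$ is equivalent to the cleaner statement that every $z\in\ttheta'$ with $y_z>0$ already lies in $\ttheta$. I would fix such a $z=\Phi(w)$ and verify the four defining inequalities of $\ttheta$ in turn; since both $\ttheta$ and $\ttheta'$ are convex, each inequality may be checked on the boundary of the convex region $\overline{\ttheta'}\cap\{y\ge0\}$, i.e.\ on the two circular-arc images, the image of the top edge, and the chord lying on $y=0$.

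The two vertical-strip constraints $0\le x_z\le1$ come out cleanly, so I would dispose of them first. From $z\in\overline{D(u,s)}$, expanding $|uz|^2\le s^2$ gives $|oz|^2\le 2\langle z,u\rangle+(s^2-|ou|^2)\le 2\langle z,u\rangle$, where the last step uses the hypothesis $s\le|ou|$; since $y_z\ge0$ and $y_u\le0$ force $\langle z,u\rangle\le x_zx_u$, the nonnegativity of $|oz|^2$ together with $x_u>0$ yields $x_z\ge0$. The mirror computation, using $z\in\overline{D(v,s)}$ and $s\le|pv|$, gives $|pz|^2\le 2(x_z-1)(x_v-1)$ after discarding the nonpositive term $y_zy_v$; as $\cos\varphi<0$ (because $|\varphi-\pi|<\pi/6$) we have $x_v=x_u+s\cos\varphi<x_u<1$, so the factor $x_v-1$ is negative and nonnegativity of $|pz|^2$ forces $x_z\le1$. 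Thus $z$ satisfies the strip condition, while the lower bound $0<y_z$ holds by assumption.

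The real content is the two disk constraints $|oz|\le1$ and $|pz|\le1$ together with the top constraint $y_z\le\sin\theta$, and I expect the disk constraints to be the main obstacle. The difficulty is that neither follows from one disk alone: the membership $z\in\overline{D(u,s)}$ with $s\le|ou|<1$ only places $z$ in a disk through $o$ and by itself yields the useless bound $|oz|<2|ou|<2$, which is tight precisely when $z$ is antipodal to $o$ on $S(u,s)$ — a point that lands in the \emph{lower} half-plane. To recover the lost factor of two one must use both disk memberships simultaneously and crucially the near-antipodal direction $|\varphi(uv)-\pi|<\pi/6$ and the radius bounds $|ou|,|pv|<1$, as these are exactly what confine the part of $\ttheta'$ poking above the $x$-axis. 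I would make this precise by evaluating $|oz|^2=2\langle z,u\rangle+s^2-|ou|^2$ along the critical-arc image (where $|uz|=s$) and its $S(v,s)$ counterpart, parametrizing each boundary point by its preimage $w\in\partial\ttheta$ and bounding the resulting angle at $u$ (resp.\ $p$) via $y_z>0\ge y_u$ and $0<x_u<1$; the top constraint is handled by the same boundary reduction, using $y_u,y_v\le0$ and $|\sin\varphi|\le\tfrac12$ to keep the image of the top edge below height $\sin\theta$. The geometric picture I would keep in mind while pushing these estimates through is that $\Phi$ is a reflection in a near-vertical axis followed by a homothety of ratio $s<1$ whose centre — the fixed point $\bigl(x_u/(1+s),\,y_u/(1-s)\bigr)$ of $\Phi$ when $\varphi=\pi$ — lies on or below the $x$-axis; the contraction therefore pulls the reflected copy of $\ttheta$ downward, so that whatever of $\ttheta'$ survives above the axis is swept strictly inside $\ttheta$.
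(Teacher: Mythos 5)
Your reduction is sound and coincides with the paper's: since $\ttheta$ lies in the open upper half-plane, the lemma is equivalent to showing that every $z\in\ttheta'$ with $y_z>0$ satisfies the defining inequalities of $\ttheta$, and your single-disk derivations of the strip bounds $0\le x_z\le 1$ are correct. The genuine gap is at exactly the point you call ``the real content'': the inequalities $|oz|\le 1$ and $|pz|\le 1$ are never proved, only planned, and the plan as stated cannot succeed because it omits an essential hypothesis. Concretely, take $\theta$ close to $\pi/3$, $\varphi=\varphi(uv)$ close to $5\pi/6$, $u=(0.9,-0.01)$, $s=0.9\le|ou|<1$. The endpoint of the critical-arc image at angle $\varphi-\theta\approx\pi/2$ from $u$ is $z\approx u+s(0,1)=(0.9,0.89)$; it satisfies $|uz|=s$, $|vz|=s$ (both disk memberships), $y_z>0\ge y_u$, $0<x_u<1$, the near-antipodal direction condition, and $|ou|,|pv|\in[s,1)$ --- every ingredient you list for this step --- and yet $|oz|^2\approx 1.6>1$. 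The only hypothesis of the lemma this configuration violates is $v\in\mathbb{R}^2_-$ (here $y_v\approx 0.44$). So no amount of ``bounding the resulting angle at $u$'' from your listed facts can yield $|oz|\le1$; the constraint $y_v\le 0$ must enter the estimate for $|oz|$, and your outline never brings it in. Relatedly, the workable pairing is the opposite of the one you propose: the circle centered at $u$ is what controls the distance to $p$, and the circle centered at $v$ controls the distance to $o$.

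The missing idea, which is the heart of the paper's proof, is an extremal argument at the axis crossings. The crossings of $S(u,s)$ with the line $y=0$ have $x$-coordinates $x_u\pm\sqrt{s^2-y_u^2}\in[0,2x_u]$ (using $s\le|ou|$), hence lie within distance $1$ of $p$ because $x_u<1$; symmetrically, using $s\le|pv|$ and $0<x_v<1$ (the latter from $|pv|<1$ and $\cos\varphi<0$), the crossings of $S(v,s)$ lie in $[2x_v-1,1]$, within distance $1$ of $o$. Now any $w\in\ttheta'$ with $y_w>0$ lies in the cap $\overline{D(u,s)}\cap\{y\ge 0\}$, and the farthest point of this cap from $p$ is one of its two chord endpoints, because the point of $S(u,s)$ farthest from $p$ has $y$-coordinate $y_u(1+s/|up|)\le 0$ and so lies outside the open upper arc; this is the paper's perpendicular-bisector step, and it gives $|pw|\le 1$. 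The symmetric argument with the cap of $D(v,s)$ gives $|ow|\le 1$; note that here the hypothesis $y_v\le0$ is used, precisely the ingredient absent from your plan. The paper first settles $\theta=\pi/3$, where $y_w>0$, $|ow|<1$, $|pw|<1$ already characterize membership in $\ttheta$, and then handles $\theta\in[\pi/4,\pi/3)$ by noting that the distance from $w$ to the line $y=0$ is at most its distance to the line $uv$, which is less than $s\sin\theta<\sin\theta$ --- essentially your top-edge remark, the one part of your plan that does go through. Until you supply the cap/bisector estimate, or a completed computation that genuinely uses $v\in\mathbb{R}^2_-$ together with both disks, the proposal establishes only the easy strip inequalities.
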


  \begin{proof}
  We first consider the extreme case when $\theta=\pi/3$. Since $u,v\in\mathbb{R}^2_-$, if the two arcs of $\ttheta'$ do not intersect with the line $y=0$, then $\ttheta$ is fully contained in $\mathbb{R}^2_-$ and the proof is done.

  Now suppose the two intersections are $a,b$, illustrated in the RHS of Figure \ref{fig:t}. Notice that $|ua|=|uv|\leq|ou|$
  and $|vb|=|uv|\leq|pv|$. Combining with the condition that $0<x_u<1$, we can see that both $a$ and $b$ lie in the segment $op$. For any point $w\in \ttheta'-\mathbb{R}^2_-$, it suffices to prove $|ow|,|pw|<1$ so that $w$ is in $\ttheta$. This is done by examining the perpendicular bisector of $aw$ and $bw$: since $|ua|>|uw|$, $u$ is in the upper half-plane of the bisector of $aw$, and
  so is $p$. Thus $|pw|<|pa|\leq |op|=1$. The same arguments hold for $|ow|<1$.

  The above shows that $\ttheta'-\ttheta\subset\mathbb{R}^2_-$ when $\theta=\pi/3$, and if $\theta$ is smaller, we only need to show that the distance $d(w,op)$ from $w$ to line $y=0$ is less than $\sin\theta$. This is done by the observation that $d(w,ab)\leq d(w,uv)$ in $\ttheta'$, and $d(w,uv)<|uv|\sin\theta<\sin\theta$. This completes the proof for Lemma \ref{lm:lmt}.
\end{proof}

\section{Overlapping Yao Graphs and Trapezoidal Yao Graphs}\label{s:oy}
  In this section we consider two variants of Yao graphs,
  overlapping Yao graphs and trapezoidal Yao graphs.

  \begin{defi}[Overlapping Yao Graph $\OY{k}$]\label{def:oy}
    Let $\gamma=\left\lceil\dfrac{k}{4}\right\rceil\dfrac{2\pi}{k}$.
    For every $u\in\pset$ and $j=0,\ldots,k-1$, select shortest $\ov{uv}$ with $v\in C_u(2j\pi/k,2j\pi/k+\gamma)$.
    The chosen edges form the \textit{overlapping Yao graph} $\OY{k}(\pset)$.
  \end{defi}

  The angle $\gamma$ in the definition is actually the smallest multiple of $2\pi/k$
  which is no less than $\pi/2$.
  We note here that the term ``overlapping"
  comes from the fact that the cones with the same apex overlap with each other,
  while in the original Yao graphs they are disjoint.
  First we claim that $\OY{k}$ is a spanner when $k$ is large.

  \begin{lm}\label{lm:oy}
    If $k>24$, then $\OY{k}$ is a $\tau_k$-spanner where $\tau_k=\left(1-2\sin\big(\dfrac{\pi}{k}+\dfrac{\pi}{8}\big)\right)^{-1}$.
  \end{lm}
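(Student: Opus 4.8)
The plan is to argue by strong induction on the distance $|uv|$ over the finite set of pairwise distances, reducing each pair to a strictly closer pair through a single edge of $\OY{k}$ and controlling the per-step cost with Corollary~\ref{col:ineq}. Fix $u,v\in\pset$ and choose the overlapping cone $C_u(2j\pi/k,2j\pi/k+\gamma)$ whose axis is closest to the direction $\varphi(uv)$; by Definition~\ref{def:oy} this cone contributes an edge $\ov{uw}$ with $w\in\pset$ and $|uw|\le|uv|$. If $w=v$, that edge is already a path of length $|uv|\le\tau_k|uv|$ and we are done. Otherwise I will show $|wv|<|uv|$, so the induction hypothesis gives $d_{\OY{k}}(w,v)\le\tau_k|wv|$, and concatenating yields $d_{\OY{k}}(u,v)\le|uw|+\tau_k|wv|$.

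The geometric heart of the argument is a bound on the relevant half-angle. Writing $\gamma_1=2j\pi/k$, $\gamma_2=\gamma_1+\gamma$ for the chosen cone and $\alpha=\max\{\varphi(uv)-\gamma_1,\gamma_2-\varphi(uv)\}$, I want $\alpha<\pi/3$ so that Corollary~\ref{col:ineq} applies to the sector $R=\overline{D(u,|uv|)\cap C_u(\gamma_1,\gamma_2)}$, which contains $w$ and has $v$ on its arc, giving $|uw|/(|uv|-|wv|)\le(1-2\sin(\alpha/2))^{-1}$. Since $\gamma$ is the smallest multiple of $2\pi/k$ that is at least $\pi/2$, we have $\gamma<\pi/2+2\pi/k$, hence the half-width satisfies $\gamma/2<\pi/4+\pi/k$. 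The left endpoints $2j\pi/k$ are spaced $2\pi/k$ apart, so the cone axes $2j\pi/k+\gamma/2$ are spaced $2\pi/k$ apart as well; thus the axis nearest $\varphi(uv)$ lies within $\pi/k$ of it, and $\varphi(uv)$ falls strictly inside that cone because $\gamma/2\ge\pi/4>\pi/k$ for $k>4$. Therefore $\alpha\le\gamma/2+\pi/k<\pi/4+2\pi/k$, which is below $\pi/3$ exactly when $2\pi/k<\pi/12$, i.e.\ when $k>24$. Feeding $\alpha/2<\pi/8+\pi/k$ into Corollary~\ref{col:ineq} gives $|uw|/(|uv|-|wv|)<(1-2\sin(\pi/8+\pi/k))^{-1}=\tau_k$.

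It remains to verify $|wv|<|uv|$ so that the induction is well founded. Since $w$ lies in the chosen cone, $\angle wuv\le\alpha<\pi/3$, and together with $|uw|\le|uv|$ the law of cosines gives $|wv|^2=|uw|^2+|uv|^2-2|uw||uv|\cos\angle wuv<|uv|^2$, because $\cos\angle wuv>1/2$ forces $|uw|<2|uv|\cos\angle wuv$. Combining the two estimates, $d_{\OY{k}}(u,v)\le|uw|+\tau_k|wv|\le\tau_k(|uv|-|wv|)+\tau_k|wv|=\tau_k|uv|$, completing the inductive step; the recursion bottoms out in the case $w=v$, and in particular the globally closest pair always returns $w=v$, serving as the base case.

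The main obstacle I anticipate is the angular bookkeeping: one must confirm that the overlapping construction genuinely lets $v$ sit within $\pi/k$ of a cone axis, so that $\alpha$ is at most a \emph{half}-width plus $\pi/k$ rather than a full width plus $\pi/k$, and then check that the resulting bound $\alpha<\pi/4+2\pi/k$ is simultaneously below $\pi/3$ (this is precisely where $k>24$ is consumed) and tight enough to reproduce the stated constant $\tau_k$. Everything else—the law-of-cosines reduction and the telescoping of the induction—is routine once Corollary~\ref{col:ineq} is available.
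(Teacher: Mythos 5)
Your proposal is correct and follows essentially the same route as the paper: both select an overlapping cone in which $v$ sits well away from the boundaries (your ``nearest axis'' choice and the paper's choice of $j$ with $v\in C_u(2j\pi/k+\pi/4,\,2(j+1)\pi/k+\pi/4)$ yield the same bound $\alpha<\pi/4+2\pi/k<\pi/3$), then apply Corollary~\ref{col:ineq} to the sector and induct on $|uv|$. Your explicit law-of-cosines verification that $|wv|<|uv|$ merely spells out what the paper asserts directly from $\alpha<\pi/3$.
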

  \begin{figure}[t]
    \centering
    \includegraphics[width=0.4\linewidth]{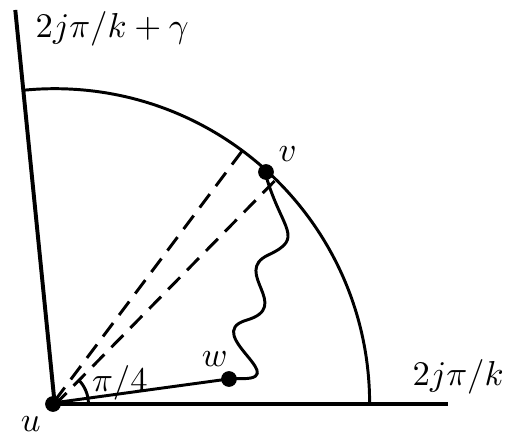}
    \caption{Illustration of the proof of Lemma \ref{lm:oy}.
		The outer cone is $C_u(\gamma_1,\gamma_2)$
		where $\gamma_1=2j\pi/k$ and $\gamma_2=2j\pi/k+\gamma$
    	The cone defined by two dashed lines is $C_u\left(2j\pi/k+\pi/4,2(j+1)\pi/k+\pi/4\right)$.}
    \label{fig:oy}
  \end{figure}
  \begin{proof}
    We prove $d_{\OY{}}(u,v)\leq \tau_k|uv|$ by induction on the length $|uv|$ for $u,v\in\pset$.
    Notice that it is always possible to find $j$ so that $v\in C_u\left(2j\pi/k+\pi/4,2(j+1)\pi/k+\pi/4\right)$.
    See Figure~\ref{fig:oy} for an illustration.
    This cone is contained in $C_u(\gamma_1,\gamma_2)$,
    where $\gamma_1=2j\pi/k$ and $\gamma_2=2j\pi/k+\gamma$, since
    $k>24$ and thus $\gamma\geq \pi/2$.
    If $\ov{uv}$ is the shortest in $C_u(\gamma_1,\gamma_2)$,
    then $\ov{uv}\in \OY{k}$ and
    we are done with the proof.

    Now, suppose that in the construction of construction of $\OY{k}$,
    we choose $\ov{uw}$ for $C_u(\gamma_1,\gamma_2)$ with $w\neq v$. Since $k>24$, one shall see that
    \begin{eqnarray*}
        \alpha &: =&\max\{\varphi(uv)-\gamma_1,\gamma_2-\varphi(uv)\} \\
                &\leq &\max\{2\pi/k+\pi/4,\gamma-\pi/4\}=2\pi/k+\pi/4<\pi/3.
    \end{eqnarray*}
    Hence, we can apply Corollary \ref{col:ineq} on the sector $R=\overline{D(u,|uv|)\cap C_u(\gamma_1,\gamma_2)}$,
    which claims
    $$
    \dfrac{|uw|}{|uv|-|vw|}\leq \tau_k=\left(1-2\sin \frac{\alpha}{2}\right)^{-1}.
    $$
    Since $\alpha<\pi/3$, we have $|vw|<|uv|$.
    By the induction hypothesis, we can see that
    $$
    d_{\OY{}}(u,v)\leq |uw|+d_{\OY{}}(v,w)\leq |uw|+\tau_k|vw|
    \leq \tau_k(|uv|-|vw|)+\tau_k|vw|=\tau_k|uv|,
    $$
    which completes the proof.
  \end{proof}

 We also note that Barba et al. proposed the so-called \textit{continuous Yao graphs} in \cite{barba2014continuous},
 which play a similar role as $\OY{k}$.
 By adapting their method, it might be possible to prove a slightly smaller stretch factor for $\OY{k}$.
 However the constant is not our primary goal in this paper, and we leave it to the future work.

 Now we define the trapezoidal Yao graphs based on the curved trapezoid $\ttheta$
 (Definition~\ref{def:ct}).
  \begin{defi}[Trapezoidal Yao Graph $\TY{k}$] \label{def:ty}
    Let $\theta=\left\lceil\dfrac{k}{8}\right\rceil\dfrac{2\pi}{k}$. For every $u\in\pset$ and $j=0,\ldots,k-1$,
    define two curved trapezoids
    $$
    \Gamma_{j1}=\rotation{(\ttheta)}{2j\pi/k}
    \qquad\text{and}\qquad
    \Gamma_{j2}=\rotation{(\ttheta^-)}{2j\pi/k}.
    $$
    Note that their bottom sides lie on the ray of angle $2j\pi/k$.
	For each $i=1,2$, we do the following:
	We grow a curved trapezoid $u+\lambda\Gamma_{ji}$ by gradually
	increasing $\lambda$ (initially 0) until its boundary hits some point $v$.
    If the critical arc hits $v$, we select $\ov{uv}$.
    Otherwise, we select nothing.
    All the selected edges form the graph $\TY{k}(\pset)$.
  \end{defi}

    \begin{figure}[t]
    \centering
    \includegraphics[width=0.4\linewidth]{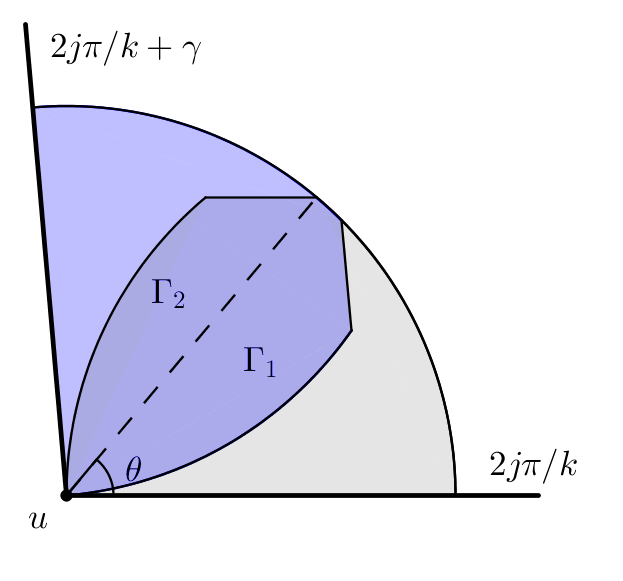}
    \caption{Illustration for Lemma \ref{lm:ty}:
    	Two curved trapezoids cover a sector.}
    \label{fig:ty}
    \end{figure}

  \begin{lm}\label{lm:ty}
    For any integer $k>24$, $\TY{k}$ is a $\tau_k$-spanner
     where $\tau_k=\left(1-2\sin\big(\dfrac{\pi}{k}+\dfrac{\pi}{8}\big)\right)^{-1}$.
  \end{lm}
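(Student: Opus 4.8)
The plan is to show that $\OY{k}$ is a subgraph of $\TY{k}$, which immediately gives the result: since $\OY{k}$ is a $\tau_k$-spanner by Lemma \ref{lm:oy}, and any subgraph relation $\OY{k}\subseteq\TY{k}$ means every $\OY{k}$ edge survives in $\TY{k}$, the shortest-path distances in $\TY{k}$ are no larger than those in $\OY{k}$, so $\TY{k}$ inherits the same stretch factor. Thus the heart of the proof is the containment claim, and I would state it as the main lemma to establish.

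\textbf{Setting up the correspondence.} Fix a vertex $u$ and an edge $\ov{uv}\in\OY{k}$ chosen for the cone $C_u(2j\pi/k,2j\pi/k+\gamma)$ with $\gamma=\lceil k/4\rceil\cdot 2\pi/k\geq\pi/2$. The definition of $\TY{k}$ uses $\theta=\lceil k/8\rceil\cdot 2\pi/k$, and two curved trapezoids $\Gamma_{j1},\Gamma_{j2}$ per ray whose union is meant to cover a sector of opening roughly $2\theta\approx\pi/2$ (see Figure \ref{fig:ty}). First I would verify the geometric covering fact suggested by that figure: the union of the two trapezoids $\Gamma_{j1}$ and $\Gamma_{j2}$ (up to the translation/dilation in the growth process) covers the full angular range of the overlapping cone. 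Concretely, since each curved trapezoid spans polar angles in $(0,\pi/2)$ and the critical arc covers $(0,\theta)$, placing one trapezoid on the ray at angle $2j\pi/k$ and its reflected copy appropriately should tile the half-cone of opening $\gamma$. This is where the choice $\theta\approx\gamma/2$ is used and where one must check $k>24$ guarantees $\theta\geq\pi/4$, matching the requirement $\theta\in[\pi/4,\pi/3)$ in Definition \ref{def:ct}.

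\textbf{The containment argument.} Given the covering, the point $v$ selected as the nearest point in the $\OY{k}$ cone lies in (the scaled copy of) one of the two curved trapezoids, say $\Gamma_{ji}$. I would argue that the growth process for $u+\lambda\Gamma_{ji}$ reaches $v$ on its \emph{critical arc}: because $v$ is the nearest point to $u$ in the whole cone, no other point can lie inside the trapezoid $u+|uv|\cdot\text{(shape)}$ while being closer to $u$ than $v$. The key subtlety is that the trapezoid's boundary also includes the two side arcs (the ones centered at the far apex $p$), and an edge is selected only when the \emph{critical} arc makes first contact. So I must rule out the possibility that the growing trapezoid first touches $v$ on a non-critical part of the boundary, or that it touches some other point $w$ first. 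The property that every $u\in\ttheta$ has $0<\varphi(ou)<\pi/2$, combined with the fact that the critical arc is exactly the locus at distance $1$ from $u$ within the angular range, should let me conclude that the nearest point in the angular sector is hit on the critical arc precisely. Here Lemma \ref{lm:lmt} is the natural tool: it controls when a translated/rotated trapezoid stays within a half-plane, which I expect is used to guarantee that no spurious earlier contact occurs.

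\textbf{Main obstacle.} The hard part will be the careful boundary analysis in the growth process: curved trapezoids have a non-trivial boundary (bottom segment, two circular side arcs, and the critical arc), and I must prove that the \emph{first} point encountered as $\lambda$ grows is the nearest $\OY{k}$ point and that it is encountered on the critical arc rather than on a side. Getting the angle bookkeeping exactly right — matching $2\theta$ against $\gamma$, handling the ceiling functions, and ensuring the two reflected trapezoids overlap enough to leave no gap in the covered sector — is the delicate computational core. Once the covering and first-contact facts are nailed down, the subgraph inclusion and hence the spanner conclusion follow directly from Lemma \ref{lm:oy}.
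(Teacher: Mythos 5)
Your proposal is correct and follows essentially the paper's own proof: verify $\theta\in[\pi/4,\pi/3)$ for $k>24$ (note it is the upper bound $\theta<\pi/3$ that needs $k$ large; $\theta\geq\pi/4$ holds automatically), cover each cone $C_u(2j\pi/k,2j\pi/k+\gamma)$ by $\tone=\rotation{(\ttheta)}{2j\pi/k}$ together with the reflected copy $\ttwo=\rotation{(\ttheta^-)}{2j\pi/k+\gamma}$ anchored on the \emph{far} bounding ray (this is what your ``appropriately'' must mean --- the pair $\Gamma_{j1},\Gamma_{j2}$ of Definition~\ref{def:ty} share the same ray and their critical arcs would not cover the whole cone), use $2\theta\geq\gamma$ so that every direction in the cone lies under one of the two critical arcs, and conclude from the empty interiors of $u+|uv|\tone$ and $u+|uv|\ttwo$ that the $\OY{k}$-nearest point $v$ is first contacted on a critical arc, giving $\OY{k}\subseteq\TY{k}$. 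One small correction: the first-contact analysis does not require Lemma~\ref{lm:lmt} (the paper reserves that for Lemma~\ref{lm:ty2}); since each scaled trapezoid $u+\lambda\Gamma$ lies in the closed disk of radius $\lambda$ around $u$, the emptiness of the interior together with the minimality of $|uv|$ within the cone already rules out any earlier contact on any part of the boundary.
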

  \begin{proof}

	Note that in Definition~\ref{def:ct}, we require that $\theta\in[\pi/4,\pi/3)$.
	When $k>24$, we can see that
	$$
	\dfrac{\pi}{4}\leq \theta=\left\lceil\dfrac{k}{8}\right\rceil\dfrac{2\pi}{k}\leq \dfrac{k+7}{k}\cdot\dfrac{\pi}{4}<\dfrac{\pi}{3}.
	$$
	Hence, the value $\theta$ in Definition~\ref{def:ty} satisfies the requirement of $\theta$ in Definition~\ref{def:ct}.

	We show that the overlapping Yao graph
	$\OY{k}$ is actually a subgraph of $\TY{k}$, from which
	the lemma is an immediate consequence.
    Given any $j$ and one cone $C_u(2j\pi/k,2j\pi/k+\gamma)$, we choose two copies of
    $\ttheta$: $\Gamma_{1}=\rotation{(\ttheta)}{2j\pi/k}$,
    and $\Gamma_{2}=\rotation{(\ttheta^-)}{2j\pi/k+\gamma}$. See Figure \ref{fig:ty}.
    Since $2\theta\geq\gamma\geq\pi/2$,
    it is clear that $\tone$ and $\ttwo$ exactly cover the interior of sector $D(o,1)\cap C_o(2j\pi/k,2j\pi/k+\gamma)$.
    Now, for any edge $\ov{uv}\in \OY{k}$ selected within this cone, both $u+|uv|\tone$ and $u+|uv|\ttwo$ must have empty interior
    ($|uv|$ is the shortest within the cone).
    Therefore, if $\varphi(uv)-2j\pi/k\leq\theta$, it should be selected into $\TY{k}$ with respect to $\tone$.
    Otherwise it should be selected with respect to $\ttwo$.
    This implies that $\OY{k}$ is a subgraph of $\TY{k}$,
    and by Lemma \ref{lm:oy}, $\TY{k}$ is a $\tau_k$-spanner as well.
  \end{proof}

\section{Yao-Yao Graphs $\YY{2k}$ are Spanners}\label{s:main}

  In this section, we prove our main result that $\YY{2k}$ has a constant stretch factor for large $k$, by show that $\YY{2k}$ spans $\TY{2k}$.

  To begin with, we prove an important property of $\TY{2k}$, which will be useful
  later.
  The property can be seen as
  an analogue of the property of $\Theta_6$
  shown in \cite[Lemma 2]{bauer2013infinite}.
  One of the reasons we can improve $\YY{6k}$ to more general $\YY{2k}$ is that we can take advantage
  of the curved trapezoid rather than the regular triangle.
  We state this property by assuming, that without loss of generality,
  a curved trapezoid $\ttheta$ is placed at its normal position,
  i.e., $\ttheta$ lies in the first quadrant with two vertices $o(0,0)$ and $p(1,0)$.
  From now on,
  we work with trapezoidal Yao graphs $\TY{2k}$.
  Hence, the associated parameter $\theta$ (see Definition~\ref{def:ty}) is  $\theta=\left\lceil\dfrac{2k}{8}\right\rceil\dfrac{2\pi}{2k}=
  \left\lceil\dfrac{k}{4}\right\rceil\dfrac{\pi}{k}$.

  \begin{lm}\label{lm:ty2}
     Suppose $o\in\pset$ and $\ttheta$ has an empty interior.
     If there is a point $a\in\pset$ such that $0<x_a<1$, $y_a\leq0$ and $0<\varphi(ap)<\pi/6$, then
     \[
     d_{\TY{}}(oa)\leq x_a+(2\tau_{2k}+1)|y_a|
     \]
     where $\tau_k$ is as defined in Lemma~\ref{lm:ty}, and there is a path in $\TY{2k}$ from $a$ to $o$ where each edge is shorter than $|oa|$.
  \end{lm}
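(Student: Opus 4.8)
The plan is to prove the bound $d_{\TY{}}(o,a)\leq x_a+(2\tau_{2k}+1)|y_a|$ by induction on the Euclidean distance $|oa|$, simultaneously establishing that every edge along the resulting path is shorter than $|oa|$. The base case is when $a$ itself is selected as the endpoint of a $\TY{2k}$-edge incident to $o$: since $o\in\pset$ and $\ttheta$ has empty interior, the curved trapezoid $\ttheta$ (in its normal position) contains no point of $\pset$ in its interior, so when we grow $u+\lambda\ttheta$ from $o$ the first point hit governs which edge is taken. The key geometric observation I would use is that the hypotheses $0<x_a<1$, $y_a\leq 0$, $0<\varphi(ap)<\pi/6$ place $a$ just below the bottom edge $op$ of the trapezoid, in exactly the configuration needed to invoke Lemma~\ref{lm:lmt}.

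\medskip

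The inductive step is where the real work lies. Since $a$ sits below $op$ with small positive angle $\varphi(ap)$, I would grow a reflected-and-rotated copy of the trapezoid anchored appropriately (a shape of the form $a+|\cdot|\rotation{(\ttheta^-)}{\varphi}$, as in Lemma~\ref{lm:lmt}) to locate the $\TY{2k}$-edge $\ov{a w}$ that $a$ selects in the direction toward $o$. Lemma~\ref{lm:lmt} then guarantees that the new trapezoid attached at $a$ stays inside the region already controlled — specifically that $\ttheta'-\ttheta\subset\mathbb{R}^2_-$ — so the intermediate vertex $w$ falls back into the lower half-plane and still satisfies the three structural hypotheses of the present lemma (namely $0<x_w<1$, $y_w\leq 0$, and $0<\varphi(wp)<\pi/6$). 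Having secured a recursive call on the strictly shorter pair, I would write
\begin{equation*}
d_{\TY{}}(o,a)\leq |aw|+d_{\TY{}}(o,w),
\end{equation*}
bound $d_{\TY{}}(o,w)$ by the induction hypothesis $x_w+(2\tau_{2k}+1)|y_w|$, and then bound the single edge $|aw|$ by combining a horizontal-displacement estimate $x_a-x_w$ with a vertical correction controlled through the ratio lemma. The factor $2\tau_{2k}+1$ is engineered precisely so that the edge length $|aw|$ plus the drop in the potential $x+(2\tau_{2k}+1)|y|$ telescopes to give the claimed inequality; concretely I expect $|aw|\leq (x_a-x_w)+(2\tau_{2k}+1)(|y_a|-|y_w|)$ to emerge from Lemma~\ref{lm:ineq} applied with $\tau=\tau_{2k}$, after which the two bounds add cleanly.

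\medskip

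The main obstacle is verifying that the intermediate vertex $w$ genuinely re-enters the admissible region so that the induction can recurse. This is not automatic: one must rule out that the selected edge from $a$ points into the upper half-plane or outside the strip $0<x<1$, and one must check that the angular condition $0<\varphi(wp)<\pi/6$ is preserved rather than degrading after each step. I expect this to follow from Lemma~\ref{lm:lmt}, whose hypotheses $|\varphi(uv)-\pi|<\pi/6$ and $|ou|,|pv|\in[|uv|,1)$ match exactly the data available here once the emptiness of $\ttheta$ forces the relevant distance inequalities; the containment conclusion of that lemma is what keeps $w$ trapped in $\mathbb{R}^2_-$. The secondary technical point is confirming that every edge is shorter than $|oa|$, which should come for free from the monotone shrinking of the trapezoids along the path together with $|aw|<|oa|$ in the inductive step. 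Finally, the edge-case analysis — when the growing trapezoid at $a$ hits $o$ directly, or when $w$ coincides with $o$ — must be handled to close the base of the induction.
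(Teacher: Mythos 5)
Your overall skeleton---walk from $a$ toward $o$ by repeatedly growing reflected trapezoids of the form $u+\lambda\rotation{(\ttheta^-)}{\psi}$, keep the intermediate points trapped in $\mathbb{R}^2_-$ via Lemma~\ref{lm:lmt}, and telescope the potential $x+(2\tau_{2k}+1)|y|$ along the walk---is indeed the skeleton of the paper's proof (the paper phrases it as an explicit iterative algorithm with a potential function rather than induction on $|oa|$, which is cosmetic). However, there is a genuine gap at the heart of your inductive step: you assume that growing the trapezoid at $a$ always produces a $\TY{2k}$-edge $\ov{aw}$. By Definition~\ref{def:ty} this is false: an edge is selected \emph{only if the first point hit lies on the critical arc}; if the boundary first hits a point elsewhere (on the arc incident to the apex, or on the top side), the construction ``selects nothing,'' and there is no edge $\ov{aw}$ in $\TY{2k}$ at all. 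Your proposal never rules this case out, and it cannot be ruled out: the grown trapezoid spans a $\pi/2$-wide cone of directions, while its critical arc covers only the $\theta$-wide sub-cone $[\psi-\theta,\psi]$, so the nearest point may well lie in the remaining part.

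This missing case is exactly where the paper does its real work, and it is where the coefficient $2\tau_{2k}+1$ comes from. When the first hit point $v$ lies off the critical arc, its direction from the current point $u$ is within $\pi/4$ of vertical, so $|uv|<2(|y_u|-|y_v|)$ and $|x_v-x_u|<|y_u|-|y_v|$; the paper then routes from $u$ to $v$ not by a single edge but by the $\OY{2k}$-spanner path of Lemma~\ref{lm:oy} (legitimate inside $\TY{2k}$ because $\OY{2k}$ is a subgraph of $\TY{2k}$ by Lemma~\ref{lm:ty}), whose length is at most $\tau_{2k}|uv|<2\tau_{2k}(|y_u|-|y_v|)$. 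In your version, where every step is a single edge bounded by a sum of coordinate drops, the factor $\tau_{2k}$ would never arise---its presence in the statement is itself a sign that spanner subpaths, not single edges, must be used in some steps. (Relatedly, your plan to extract $|aw|\leq(x_a-x_w)+(2\tau_{2k}+1)(|y_a|-|y_w|)$ from Lemma~\ref{lm:ineq} with $\tau=\tau_{2k}$ does not work: that lemma bounds ratios of the form $|uw|/(|uv|-\tau|vw|)$ for spanner inductions, and in the direct-edge case the paper needs only the elementary bound $|uv|\leq(x_u-x_v)+(|y_u|-|y_v|)$ with no $\tau$ at all.) A second, smaller omission: the walk can overshoot, reaching a point $u$ with $\varphi(ou)\leq-\pi/6$ from which no admissible trapezoid toward $o$ exists; the paper terminates there with a third case, again an $\OY{2k}$-path to $o$, using $|uo|\leq 2|y_u|$ and $-x_u\leq|y_u|$. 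Your edge-case list (the trapezoid hits $o$ directly, or $w=o$) does not cover this, and the preservation of the invariant $0<\varphi(wp)<\pi/6$ also needs a short separate computation rather than following from Lemma~\ref{lm:lmt}, which only yields containment in the lower half-plane.
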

  \begin{proof}
    Note that simply applying Lemma \ref{lm:ty} is insufficient to achieve
    the guarantee.
    Instead, we present an iterative algorithm for finding a path from $a$ to $o$.
    The path found by the iterative algorithm is not necessarily a shortest path from $a$ to $o$.
    Nevertheless, we can bound its length as in the lemma.

    \begin{figure}[t]
      \centering
      \includegraphics[width=0.8\linewidth]{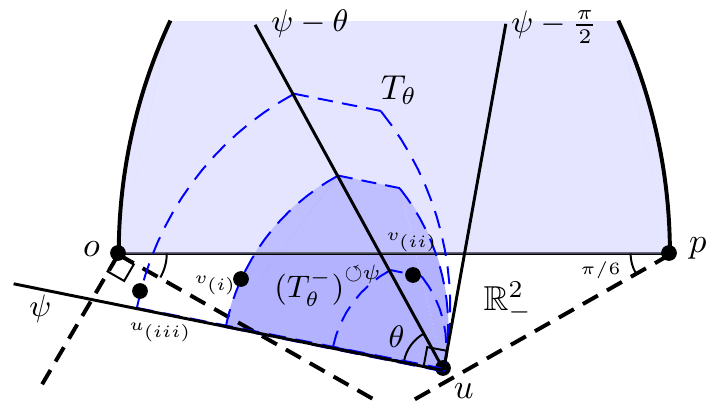}
      \caption{Illustration of three cases in the proof of Lemma \ref{lm:ty2}.}
      \label{fig:typ}
    \end{figure}
    \begin{center}
      \begin{minipage}[h]{\linewidth}
      \begin{tabbing}
        \hspace*{0.8cm}\=\hspace*{0.8cm}\= \kill
        Let $u$ be the current point in $\pset$; Initially $u$ is set to be $a$. \\
        While $u\neq o$ and $\varphi(ou)>-\dfrac{\pi}{6}$ do \\
        \> Let $\psi=\min\{j\pi/k\mid j\pi/k>\varphi(uo), j=0,1,\ldots,2k-1\}$; \\
        \> Grow a curved trapezoid $u+\lambda\rotation{(\ttheta^-)}{\psi}$ until its boundary hits some point $v$. \\
        \> If $v\in C_u(\psi-\dfrac{\pi}{2},\psi-\theta)$: \\
         \> \> (i) $\TY{2k}$ must contain the edge $\ov{uv}$. We add $\ov{uv}$ to our path; \\
         \> Otherwise: \\
         \> \> (ii) Add the path from $u$ to $v$ in $\OY{2k}$ constructed in Lemma~\ref{lm:oy}; \\
        \> Set the current point $u$ to be $v$ and proceed to the next iteration. \\
        If now $u=o$, the path is already found; \\
        (iii) Otherwise take the path from $u$ to $o$ in $\OY{2k}$ constructed in Lemma~\ref{lm:oy}.
      \end{tabbing}
      \end{minipage}
    \end{center}

    See Figure~\ref{fig:typ} for an illustration of the three cases (i),(ii) and (iii).
    Let $l(u)$ be the length of the path from $u$ to $o$ generated by the algorithm above.
    To analyze its behaviour, we define a potential function
     $$
     \Phi(u)=x_u+(2\tau_{2k}+1)|y_u|-l(u).
     $$
    We claim that the potential function never increases
    as the algorithm proceeds and is eventually $0$.
    The potential only changes in the three labeled steps (i),(ii) and (iii).
    When executing step (i) and (ii), it is clear that $\varphi(uo)\in\left(\dfrac{5}{6}\pi,\pi\right]$, $\varphi(up)\in\left[0,\dfrac{1}{6}\pi\right)$. Therefore $\psi$ also falls in the range $\left(\dfrac{5}{6}\pi,\pi\right]$. Moreover, $v$ must be contained in $\mathbb{R}^2_-$ due to Lemma \ref{lm:lmt}, that is, $y_v\leq 0$.
  \begin{itemize}
    \item[-] In step (i), $v$ is simply the nearest neighbor in the cone $C_u(\psi-\theta,\psi)$.
    Since $\varphi(up)<\pi/6$, we know $|uv|<|vp|$. Since $\psi-\theta>\pi/2$ and $\psi\leq\pi$,
    we can see that $x_v\leq x_u$ and $|y_v|<|y_u|$. Hence the change in potential is
    \begin{align*}
    \Delta\Phi=\Phi(v)-\Phi(u) &=|uv|-(x_u-x_v)-(2\tau_{2k}+1)(|y_u|-|y_v|) \\
    &\leq|uv|-(x_u-x_v)-(|y_u|-|y_v|)\leq 0.
    \end{align*}

    \item[-] In step (ii), since $\varphi(uv)\in[\psi-\dfrac{\pi}{2},\psi-\theta)$, we know $|\varphi(uv)-\pi/2|$ is at most $\pi/4$, so $|y_v|<|y_u|$ and $|x_v-x_u|<|y_u|-|y_v|$. That further implies $|uv|<2(|y_u|-|y_v|)$, and thus $d_{\TY{}}(uv)\leq\tau_{2k}|uv|<2\tau_{2k}(|y_u|-|y_v|)$. Therefore the change in potential is
    \begin{align*}
      \Delta\Phi=\Phi(v)-\Phi(u) &=d_{\TY{}}(uv)-(x_u-x_v)-(2\tau_{2k}+1)(|y_u|-|y_v|) \\
      &=(d_{\TY{}}(uv)-2\tau_{2k}(|y_u|-|y_v|))+(x_v-x_u-(|y_u|-|y_v|))\leq 0.
    \end{align*}

    \item[-] In step (iii), suppose the node arrived before $u$ is $u'$, with $\varphi(u'o)>\dfrac{5}{6}\pi$. Since $\ov{u'u}$ is chosen, it must be the case that $|u'u|\leq|u'o|$. It is immediate that $\varphi(uo)\in\left(\dfrac{\pi}{3},\dfrac{5}{6}\pi\right]$, and thus $|uo|\leq 2|y_u|$, $-x_u\leq |y_u|$. Therefore, we have that
    \[
    \Delta\Phi=\Phi(o)-\Phi(u)=(d_{\TY{}}(uo)-2\tau_{2k}|y_u|)+(-x_u-|y_u|)\leq0.
    \]
  \end{itemize}
  Now that $\Phi(o)=0$, and the above arguments show that the potential cannot increase during the path from $u$ to $o$, we can conclude that $\Phi(a)\geq 0$. That is, $l(a)\leq x_a+(2\tau_{2k}+1)|y_a|$.

  Also notice that in steps (i) and (ii), it is ensured that $|uv|<|ou|$ and $|ov|<|ou|$ since $v$ is contained in the curved trapezoid. On the other hand, the paths from $\OY{2k}$ in steps (ii) and (iii), which are constructed according to Lemma~\ref{lm:oy}, contains only edges shorter than the direct distance. Thus the edges we selected are all shorter than $|oa|$.
  \end{proof}

  \begin{lm}\label{lm:main}
    On a Yao-Yao graph $\YY{2k}(\pset)$ with $k\geq 42$, if $\ov{uv}\in \TY{2k}$, then $d_{\YY{}}(uv)\leq \tau_k'|uv|$,
    where $\tau_k'=\sqrt{2}+O(k^{-1})$ is a constant depending only on $k$.
  \end{lm}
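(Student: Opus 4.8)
The plan is to prove the statement by strong induction on the edge length $|uv|$, which is well founded because $\pset$ is finite and therefore realises only finitely many pairwise distances. After a similarity transformation (a rotation, possibly a reflection $\mathcal{O}\mapsto\mathcal{O}^-$, and the dilation $\lambda=|uv|^{-1}$) I would put everything in the normal position of Definition~\ref{def:ct}: set $u=o$, let the empty curved trapezoid $\ttheta$ witnessing $\ov{uv}\in\TY{2k}$ sit in the first quadrant with $p=(1,0)$, and let $v$ lie on its critical arc, so that $\varphi(ov)\in(0,\theta)$ and $|uv|=1$. The induction hypothesis then reads: every $\TY{2k}$ edge strictly shorter than $1$ can be spanned in $\YY{2k}$ with stretch $\tau'_k$.

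First I would locate a $\YY{2k}$ edge incident to $u$ that absorbs almost all of the segment $uv$. Let $w$ be the point selected by the Yao step at $u$ in the $2k$-cone of angular width $\pi/k$ containing $\varphi(uv)$, so $\ov{uw}\in\Yao_{2k}$ and $|uw|\le|uv|$. The key geometric input is that the radius-$1$ sector of this thin cone is almost contained in the empty trapezoid $\ttheta$: since $\theta<\pi/3$, the inequalities $|pz|<1$ and $|oz|<1$ hold throughout, and the two regions differ only in an $O(1/k)$-wide sliver near the cap of $\ttheta$. Consequently $w$ is pinned to within $O(1/k)$ of $v$ in both coordinates. I would then examine the reverse Yao step whose head is $w$: either $\ov{uw}$ survives, giving a genuine $\YY{2k}$ edge $\{u,w\}$, or a strictly closer source $u^{*}$ in the same cone of $w$ survives, giving $\{u^{*},w\}\in\YY{2k}$ together with a short residual hop $u\to u^{*}$. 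In both subcases the emptiness of $\ttheta$ forces the relevant endpoint into the lower region $y\le0$, $0<x<1$, $\varphi(\,\cdot\,p)\in(0,\pi/6)$ governed by Lemma~\ref{lm:ty2}.

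Having thus reduced $u\to v$ to one $\YY{2k}$ edge plus two endpoints that are each $O(1/k)$-close to $u$ and to $v$, I would close both residual hops through Lemma~\ref{lm:ty2}. It produces a $\TY{2k}$ path all of whose edges are \emph{strictly} shorter than $|uv|$, so the induction hypothesis applies to each edge and yields bounds of the form $\tau'_k\big(x+(2\tau_{2k}+1)|y|\big)$ in the local frame. Because the transverse coordinates here satisfy $|y|=O(1/k)$, the term $(2\tau_{2k}+1)|y|$ contributes only $O(1/k)$ to the total stretch, despite the large factor $2\tau_{2k}+1$; the dominant contribution is the near-geodesic horizontal part. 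Summing everything gives an estimate of the shape $d_{\YY{}}(u,v)\le |u^{*}w| + \tau'_k\big(x+(2\tau_{2k}+1)|y|\big)$.

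It then remains to verify the closing inequality $d_{\YY{}}(u,v)\le\tau'_k|uv|$, which I would reduce to bounding a single ratio of the form $|w'v|/(|uv|-|uw'|)$ over the admissible positions of the intermediate point $w'$. By Lemma~\ref{lm:ineq} and Corollary~\ref{col:ineq} it suffices to test a few extreme positions, and the worst one — realised exactly at the $O(1/k)$ angular gap between the fatter curved trapezoid and the thin $2k$-Yao cone, where the forced detour turns by nearly a right angle — gives $\tau'_k=\sqrt{2}+O(k^{-1})$. The hard part, I expect, is precisely the reverse Yao step, which has no analogue in the spanner proofs for ordinary Yao graphs: the direct edge $\ov{uv}$ may be deleted, so I must (i) show that the surviving source $u^{*}$ always lands in the region where Lemma~\ref{lm:ty2} is applicable, (ii) certify well-foundedness by checking that every edge fed back into the induction is strictly shorter than $|uv|$, and (iii) identify the extremal configuration producing the factor $\sqrt{2}$, while separately dispatching the degenerate case in which a point on the bottom edge $\overline{op}$ pre-empts $v$ in the Yao step.
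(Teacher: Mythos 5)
Your overall skeleton does coincide with the paper's: induction on $|uv|$, normalization with $u=o$, $p=(1,0)$ and $v$ on the critical arc, the observation that $\ov{uv}$ survives the Yao step at $u$, then an analysis of the reverse Yao step at $v$ producing a surviving source $u^*$ with $|vu^*|\le|uv|$, closed off by Lemma~\ref{lm:ty2} and an extremal-ratio computation. One simplification you missed: since $\theta$ is a multiple of $\pi/k$, the whole radius-$|uv|$ sector of the Yao cone at $u$ containing $v$ lies inside $\overline{\ttheta}$ (for $0<r\le 1$ and polar angle in $(0,\theta)$ one automatically has $|pz|<1$ and $y_z<\sin\theta$), so, modulo the bottom-edge degeneracy you flagged, your $w$ is \emph{exactly} $v$. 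There is no ``$O(1/k)$ sliver'', and no second residual hop $w\to v$ to close --- which is fortunate, because Lemma~\ref{lm:ty2} only produces paths ending at the apex of an empty trapezoid and could not have closed that hop.

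The genuine gap is your claim (i). It is false that the emptiness of $\ttheta$ forces the surviving source $u^*$ into the region $\{y\le 0,\ 0<x<1,\ 0<\varphi(\,\cdot\,p)<\pi/6\}$ where Lemma~\ref{lm:ty2} applies. Emptiness only says that $u^*$ avoids the open trapezoid while lying in the thin cone of $v$ with $|vu^*|\le|uv|$; such a point can perfectly well lie in the \emph{upper} half-plane, slightly ``behind'' $u$, with $x_{u^*}<0$ or $|pu^*|\ge 1$ (just outside the arc of $\ttheta$ centered at $p$). For such $u^*$ every hypothesis of Lemma~\ref{lm:ty2} fails, and your proposal then has no mechanism at all to bound $d_{\YY{}}(u,u^*)$. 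This is exactly the case the paper must treat separately (its case $y_w>0$): there one shows via Lemma~\ref{lm:ineq} that $|vu^*|\le\bigl(1-2\tau_{2k}\sin\frac{\pi}{2k}\bigr)^{-1}\bigl(|uv|-\tau_{2k}|uu^*|\bigr)$, connects $u$ to $u^*$ by a shortest $\TY{2k}$-path of length at most $\tau_{2k}|uu^*|<|uv|$ (one place where $k\ge 42$ enters, via $2\tau_{2k}\sin\frac{\pi}{2k}<1$), and applies the induction hypothesis edge by edge to that path. A secondary gap of the same kind: even when $u^*$ does fall in the lower half-plane, the hypothesis $\varphi(u^*p)<\pi/6$ is not ``forced'' either; when $\varphi(uv)>\pi/k$ the paper needs a circumcircle argument ($p$ lies outside the circumcircle of $u,v,u^*$, hence $\varphi(u^*p)\le\angle uvu^*\le\pi/k$) to establish it. Finally, the worst-case factor $\sqrt2$ does not arise at the angular gap between the trapezoid and the Yao cone, as you suggest; it is $\sec(\theta+\pi/k)\to\sec(\pi/4)$, coming from the slope of the segment $u^*v$ in the lower-half-plane case.
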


  \begin{proof}
    Let $\tau_k'\geq1$ be a constant to be fixed later.
    Our proof is by induction on the length $|uv|$.
    The base case is simple:
    the shortest edge is in both $\TY{2k}$ and $\YY{2k}$.
	Now, consider an edge $\ov{uv}\in \TY{2k}$.
    Without loss of generality, we assume that
    $|uv|=1$. We further assume that $\ttheta$, which generates $\ov{uv}$ in $\TY{2k}$, in its normalized position, i.e.,
    $u$ is $o(0,0)$ and another vertex is $p(1,0)$.
    See Figure~\ref{fig:main}.
    Noticing that $\theta$ is a multiple of $\pi/k$, we can see that
    $v$ must be the nearest neighbor of $u$ in its corresponding cone of $\YY{2k}$.

    If $\ov{uv}\in \YY{2k}$, then $d_{\YY{}}(uv)=|uv|$ and we are done with the proof.
    Otherwise, there must be another edge $\ov{wv}\in \YY{2k}$ where $u$ and $w$ are in the same $v$-apex cone,
    and $|vw|\leq|uv|$ (this is due to the construction of Yao-Yao graph).
    It follows that $|uw|<|uv|$ since $k\geq 42$.

    Now, we prove by induction that $d_{\YY{}}(uv)\leq \tau_k'|uv|$.
    Now noticing that $\ttheta$ has an empty interior, there are only two cases for $w$ we need to consider.
    See Figure \ref{fig:main} for an example.
    \begin{enumerate}
      \item $y_w>0$: See the left part of Figure~\ref{fig:main}.
      By Lemma \ref{lm:ineq}, we can see that within all possible positions of $w$, $|vw|/(|uv|-\tau_{2k}|uw|)$ can only be maximized when $|vw|=|uv|$ or when $w$ is on the left boundary of $\ttheta$. Since when $k\geq 42$, it holds $2\tau_{2k}\sin\frac{\pi}{2k}<1$, therefore
      $$
      \dfrac{|vw|}{|uv|-\tau_{2k}|uw|}\leq\dfrac{1}{1-2\tau_{2k}\sin\frac{\pi}{2k}}.
      $$
      Hence, if $\tau_k'\geq 1/(1-2\tau_{2k}\sin\frac{\pi}{2k})$, we have that
      $
      |vw|\leq\tau_k'(|uv|-\tau_{2k}|uw|).
      $
      Consider the shortest path from $u$ to $w$ in $\TY{2k}$.
      Notice that $d_{\TY{}}(uw)\leq \tau_{2k}|uw|<|uv|$.
      Applying the induction hypothesis
      to every edge of the shortest path,
      we can see $d_{\YY{}}(uw)\leq \tau_k'd_{\TY{}}(uw)$.
      Combining the above inequalities, we can finally obtain that
      \[
      d_{\YY{}}(uv)\leq d_{\YY{}}(uw)+|vw|\leq \tau_k'(d_{\TY{}}(uw)+|uv|-\tau_{2k}|uw|)\leq \tau_k'|uv|.
      \]

      \item $y_w\leq 0$: See the middle and right parts of Figure~\ref{fig:main}.
      Consider the two cases of angle $\varphi(uv)$:
      \begin{itemize}
        \item If $0\leq\varphi(uv)\leq \pi/k$, then $\varphi(wp)\leq\varphi(wv)\leq\pi/k$;
        \item If $\pi/k<\varphi(uv)\leq\theta$, then $\angle upv\leq \frac{(k-1)\pi}{2k}\leq\angle uwv$. Thus $p$ is outside the circumcircle of $\triangle uvw$, and $\varphi(wp)=\angle upw\leq\angle uvw\leq\pi/k$.
      \end{itemize}
      Therefore in both cases, we have $0<x_w<x_p$ and $0\leq \varphi(wp)\leq\pi/k$. We notice the following two inequalities:
	  \begin{align*}
      &|y_w|\leq(x_p-x_w)\tan\varphi(wp)\leq(|uv|-x_w)\tan\frac{\pi}{k}, \\
      &|vw|\leq(x_v-x_w)\sec\varphi(wv)\leq(|uv|-x_w)\sec\left(\theta+\frac{\pi}{k}\right).
      \end{align*}
      Consider the path from $w$ to $u$ as we constructed in Lemma~\ref{lm:ty2}
      ($w$ and $u$ correspond to $a$ and $o$ in the lemma, respectively).
      Since
      all edges in the constructed path are shorter than $|wu|$ (thus shorter than
      $|uv|$ as well), we can apply the induction hypothesis to get
      $d_{\YY{}}(uw)\leq \tau_k'd_{\TY{}}(uw)$.
      Applying Lemma \ref{lm:ty2}, we have
      \begin{align*}
      d_{\YY{}}(uv) & \leq d_{\YY{}}(uw)+|vw|\leq \tau_k'd_{\TY{}}(uw)+|vw|  \\
      &\leq \tau_k'(x_w+(2\tau_{2k}+1)|y_w|)+|vw| \\
      &\leq \tau_k'x_w +
      \left(\tau_k' (2\tau_{2k}+1)\tan\frac{\pi}{k}+\sec\left(\theta+\frac{\pi}{k}\right)\right)
      (|uv|-x_w).
      \end{align*}
      Suppose we choose $\tau_k'$ such that
      $
      \tau_k'(2\tau_{2k}+1)\tan\frac{\pi}{k}+\sec(\theta+\frac{\pi}{k})\leq\tau_k',
      $
      (which is equivalent to
	  $
      \tau'\geq ((1-(2\tau_{2k}+1)\tan\frac{\pi}{k})\cos(\theta+\frac{\pi}{k}))^{-1}
      $
      ).
      Note that the above holds also because
      $(2\tau_{2k}+1)\tan\frac{\pi}{k}<1$ when $k\geq 42$.
      Finally, we can conclude that
      $$
      d_{\YY{}}(uv) \leq \tau'_k |uv|.
      $$
    \end{enumerate}
    \begin{figure}[t]
      \centering
      \includegraphics[width=0.95\linewidth]{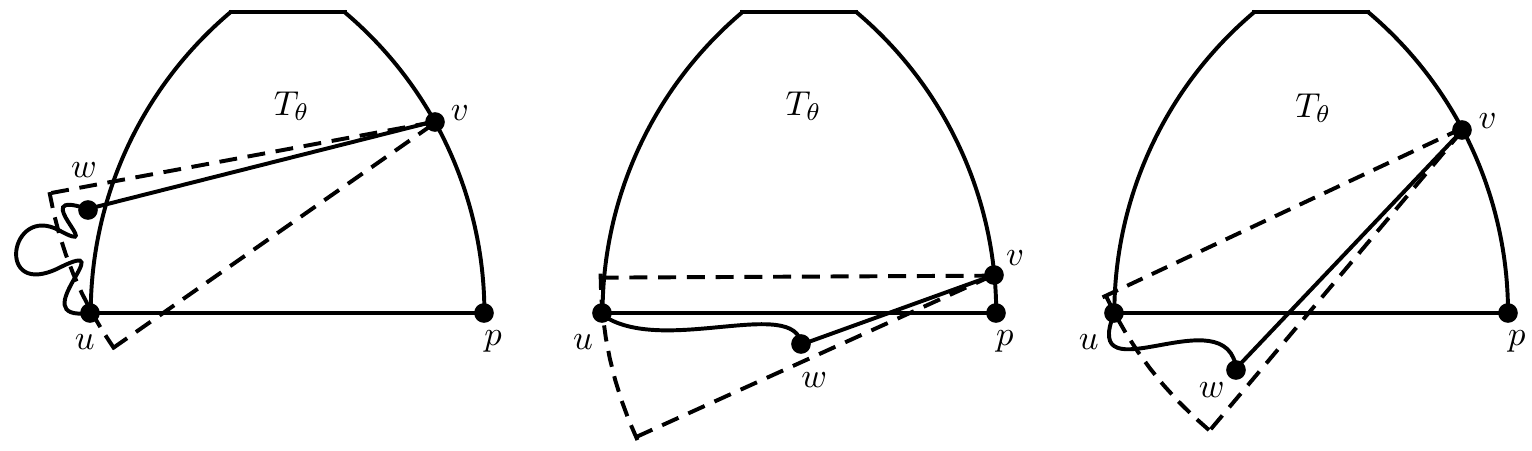}
      \caption{Proof for Lemma \ref{lm:main}. Left: $w$ is in the upper half-plane; Middle: $w$ is in the lower half-plane and $\varphi(uv)\leq\pi/k$; Right: $w$ is in the lower half-plane and $\varphi(uv)>\pi/k$.}
      \label{fig:main}
    \end{figure}
    We can conclude from the above two cases
    that $\tau_k'$ can be chosen to be $\sec\frac{\pi}{4}+O(k^{-1})=\sqrt{2}+O(k^{-1})$.
  \end{proof}
  Combining Lemma \ref{lm:ty} and \ref{lm:main},
  with the observation that $\tau_{k}=(1-2\sin(\pi/8))^{-1}+O(k^{-1})$,
  we can get our main result that almost all $\YY{2k}$ are spanners.
  \setcounter{thm}{1}
  \begin{thm} (restated)
    For any integer $k\geq42$, $\YY{2k}$ is a $t_k$-spanner, where $t_k=\tau_k'\tau_{2k}=\sqrt{2}(1-2\sin(\pi/8))^{-1}+O(k^{-1})=6.03+O(k^{-1})$.
  \end{thm}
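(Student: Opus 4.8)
The plan is to obtain the main theorem purely by composing the two spanning estimates already established, namely Lemma~\ref{lm:ty} (which says $\TY{2k}$ is a $\tau_{2k}$-spanner of the complete Euclidean graph, applying Lemma~\ref{lm:ty} with parameter $2k$, valid since $2k>24$) and Lemma~\ref{lm:main} (which bounds $d_{\YY{}}$ on every single edge of $\TY{2k}$). Concretely, I would fix an arbitrary pair $u,v\in\pset$ and first invoke Lemma~\ref{lm:ty} to produce a shortest path $u=x_0,x_1,\dots,x_m=v$ in $\TY{2k}$ whose total length $\sum_i|x_ix_{i+1}|=d_{\TY{}}(u,v)$ is at most $\tau_{2k}|uv|$.

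The second step is to promote this $\TY{2k}$-path to a $\YY{2k}$-path. Each link $\ov{x_ix_{i+1}}$ of the path is by construction an edge of $\TY{2k}$, so Lemma~\ref{lm:main} applies directly and gives $d_{\YY{}}(x_i,x_{i+1})\leq\tau_k'|x_ix_{i+1}|$. Summing over the links and using the triangle inequality for shortest-path distances yields
\[
d_{\YY{}}(u,v)\leq\sum_{i}d_{\YY{}}(x_i,x_{i+1})\leq\tau_k'\sum_i|x_ix_{i+1}|=\tau_k'\,d_{\TY{}}(u,v)\leq\tau_k'\tau_{2k}\,|uv|,
\]
so that $\YY{2k}$ is a $(\tau_k'\tau_{2k})$-spanner. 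This is exactly the ``spanner of a spanner'' telescoping, and it is valid for every $u,v$ precisely because Lemma~\ref{lm:main} is phrased per edge rather than only for the aggregate distance $d_{\TY{}}$.

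What remains is to evaluate the constant $t_k=\tau_k'\tau_{2k}$ and extract its limit. Here I would substitute the explicit forms $\tau_{2k}=\bigl(1-2\sin(\tfrac{\pi}{2k}+\tfrac{\pi}{8})\bigr)^{-1}$ from Lemma~\ref{lm:ty} and $\tau_k'=\sqrt{2}+O(k^{-1})$ from Lemma~\ref{lm:main}. As $k\to\infty$ the argument $\tfrac{\pi}{2k}+\tfrac{\pi}{8}\to\tfrac{\pi}{8}$, so $\tau_{2k}\to(1-2\sin(\pi/8))^{-1}$, and hence $t_k=\sqrt{2}\,(1-2\sin(\pi/8))^{-1}+O(k^{-1})$; a numerical check (using $\sin(\pi/8)\approx0.3827$, so $(1-2\sin(\pi/8))^{-1}\approx4.26$) gives the stated value $6.03$.

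I do not expect a genuine obstacle in this final step, since all the geometric difficulty has already been absorbed into Lemmas~\ref{lm:ty} and~\ref{lm:main}. The only points demanding care are (a)~confirming that the per-edge bound of Lemma~\ref{lm:main} telescopes along the chosen $\TY{2k}$-path without any worsening of the constant, and (b)~being precise about the $O(k^{-1})$ error terms, i.e.\ verifying that both the $\tau_{2k}$ expansion and the $\tau_k'$ expansion contribute only $O(k^{-1})$ corrections to the product, so that the advertised limiting constant $\sqrt{2}\,(1-2\sin(\pi/8))^{-1}$ is indeed correct.
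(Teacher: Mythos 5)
Your proposal is correct and takes essentially the same route as the paper: the paper's proof of the theorem is precisely the composition of Lemma~\ref{lm:ty} (applied with parameter $2k$) with the per-edge bound of Lemma~\ref{lm:main}, followed by the observation that $\tau_{2k}=(1-2\sin(\pi/8))^{-1}+O(k^{-1})$. The telescoping along a shortest $\TY{2k}$-path and the numerical evaluation $\sqrt{2}\,(1-2\sin(\pi/8))^{-1}\approx 6.03$ that you spell out are exactly what the paper's one-line ``combining'' step means.
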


  \section{Conclusion and Future Work}\label{s:con}

  In this paper we proved that Yao-Yao graphs $\YY{2k}$ are geometric spanners for $k$ large enough,
  making a positive progress to the long-standing Conjecture~\ref{cj}.
  For Yao-Yao graphs with odd parameters, the resolution of the conjecture is still elusive and
  seems to require additional new ideas.

  We did not try very hard to optimize the constant stretch ratio for different $k$ values.
  Hence, the constant claimed in Theorem \ref{thm:main} may well be further improved.
  One potential approach is to use techniques developed in \cite{barba2014continuous}.
  Obtaining tighter bounds (and lower bounds as well!) is left as an interesting future work.

  We propose some other potential future directions, which are interesting in their own rights
  and might lead to sparse spanners with bounded degrees and small stretch factors.
  \begin{itemize}
    \item
        Consider the variant of Yao-Yao graphs where
        the cones are not divided uniformly.
        One could expect that such variants are spanners if the apex angles of the cones are all small.
        It seems that our techniques in the paper can be adapted to this situation if the separated cones
        are \textit{centrosymmetric}: every separation ray of polar angle $\varphi$ has a corresponding one at polar angle $\pi+\varphi$.
        But the more general case is still open.
    \item Instead of insisting on one polar division for every node,
        we can let each point choose its polar axis individually by uniformly and independently randomizing a direction.
        We conjecture that
        such graphs are also spanners (in expectation or with high probability) for $k$ large enough.
  \end{itemize}

  \bibliography{p66-Li}

\end{document}